\newcommand{\be}[1]{\begin{equation}\label{#1}}
\newcommand{\ee}{\end{equation}}
\theoremstyle{plain}
\newtheorem{Proposition}{Proposition}
\newtheorem{Lemma}[Proposition]{Lemma}
\newtheorem{Theorem}[Proposition]{Theorem}
\theoremstyle{definition}
\newtheorem{Definition}{Definition}
\newtheorem{Notation}[Definition]{Notation}
\newtheorem{Note}[Definition]{Note}
\def\b#1{\textcolor{blue}{#1}}
\def\al{\alpha}
\def\b{\beta}
\def\ga{\gamma}
\def\eps{\varepsilon}
\def\blackslug{\hbox{\hskip 1pt \vrule width 4pt height 8pt depth 1.5pt \hskip 1pt}}
\def\qed{\quad\blackslug\lower 8.5pt\null\par}
\def\RR{\mathbb{R}}
\def\p{\prime}
\def\pp{\prime\prime}
\def\ppp{\prime\prime\prime}
\def\erfc{\mathrm{erfc}}
\def\I{\mathcal{I}}
\def\J{\mathcal{J}}
\def\S{\mathcal{S}}
\def\E{\mathcal{E}}
\def\L{\mathcal{L}}
\def\N{\mathcal{N}}
\def\G{\mathcal{G}}
\def\lb{\left|}
\def\rb{\right|}
\def\lv{\left\|}
\def\rv{\right\|}
\def\Af{\mathbf{A}}
\def\Nf{\mathbf{N}}
\title{A quasi-solution approach to Blasius similarity equation with general boundary conditions}
\author{T. E. Kim$^{1}$} 
\address{The Mathematics Department\\The Ohio State University\\Columbus, OH 43210 USA}
\begin{document}
\vskip -0.2cm
\begin{abstract}
	A recently developed method \cite{Costinetal}, \cite{Dubrovin}, and \cite{BlasiusCT} is used to find an analytic approximate solution with rigorous error bounds to the classical Blasius similarity equation with general boundary conditions. This provides detailed proofs for the results reported in \cite{Kim}.
\end{abstract}
\vskip -2cm
\date{April 8, 2014}
\maketitle

\section{Introduction}
The classical similarity solution of Blasius to the boundary layer equation past a semi-infinite plate satisfies the two-point boundary value problem
\begin{equation} \label{eq:blasius}
f^{\ppp}(x) + f(x) f^{\pp} = 0 \quad \text{for} \quad x\in(0,\infty)
\end{equation}
with no-slip boundary conditions:
\begin{equation} \label{eq:bc}
f(0) = 0,\, f^{\p}(0) = 0,\, \text{and} \, \lim_{x\to+\infty} f^{\p}(x) = 1.
\end{equation}
One may consider \eqref{eq:blasius} with the following generalized boundary conditions:
\be{eq:gen_bc}
f(0) = \tilde{\al},\, f^{\p}(0) = \tilde{\ga},\, \text{and} \, \lim_{x\to+\infty} f^{\p}(x) = 1.
\ee
In \cite{Weyl}, using a transformation 
\be{eq:transform}
f(x) = a^{-1/2}F(a^{-1/2}x)
\ee
introduced by T\"{o}pfer \cite{Topfer}, it is shown that the boundary value problem \eqref{eq:blasius} and \eqref{eq:bc} can be written as the initial value problem
\be{eq:trans_blasius}
F^{\ppp}(x) + F(x)F^{\pp}(x) = 0 \quad \text{for} \quad x\in(0,\infty)
\ee
with initial conditions
\be{eq:ic}
F(0) = 0, \, F^{\p}(0) = 0, \, F^{\pp}(0) = 1.
\ee
At infinity, $\lim_{x\to\infty} F^{\p}(x) = a\in\RR^{+}$. Under the transformation, the generalized boundary conditions \eqref{eq:gen_bc} become
\be{eq:gen_ic}
F(0) = a^{1/2}\tilde{\al} \equiv \al, \, F^{\p}(0) = a\tilde{\ga} \equiv \ga, \, F^{\pp}(0) = 1.
\ee
The non-dimensionalized wall stress is given by 
\be{eq:wall-stress}
f^{\pp}(0) = a^{-3/2}
\ee
In this paper, the quasi-solution approach developed in \cite{BlasiusCT} is adopted to find an approximate analytic solution to the problem \eqref{eq:trans_blasius} with generalized initial conditions \eqref{eq:gen_ic} and to prove its rigorous error bounds. 

\section{Representation of a quasi-solution and main results}
For simplicity, we consider the initial value problem \eqref{eq:trans_blasius}--\eqref{eq:gen_ic} with $\ga = 0$ and $\al\in\J:= [-\frac{3}{50},\frac{3}{50}]$. (Throughout the rest of the paper, whenever \eqref{eq:gen_ic} is referred, this condition will be assumed.)
Through piecewise polynomial representations, other intervals in $\al$ can be examined in a similar fashion. Let
\be{eq:poly}
P(y;\b) = \sum_{i=0}^{13}\sum_{j=0}^{5}\frac{p_{i,j}}{(i+1)(i+2)(i+3)}\b^{j}y^{i}
\ee
where $p_{i,j}$ is the $(i+1,j+1)$-entry of the following matrix 
\be{}
\left[\begin{array}{cccccc}
{\frac{29589}{493148}} & -{\frac{9845}{82042}} & -{\frac{274}{40132715}} & {\frac{241}{11270972}} & -{\frac{422}{16143111}} & {\frac{308}{28130517}}\\
\noalign{\medskip}{\frac{15185}{1706376}} & -{\frac{17096}{473735}} & {\frac{36599}{968864}} & -{\frac{19441}{3418968}} & {\frac{6287}{892276}} & -{\frac{10649}{3570017}}\\
\noalign{\medskip}-{\frac{203116}{65155}} & -{\frac{3042}{970153}} & -{\frac{15440}{235863}} & {\frac{21239}{89058}} & -{\frac{114887}{372923}} & {\frac{5024}{37953}}\\
\noalign{\medskip}-{\frac{72804}{75433}} & {\frac{239497}{147253}} & {\frac{213995}{192583}} & -{\frac{110079}{28121}} & {\frac{1322305}{259224}} & -{\frac{80021}{35684}}\\
\noalign{\medskip}{\frac{106800}{43663}} & -{\frac{112122}{86717}} & -{\frac{155285}{19732}} & {\frac{525204}{17519}} & -{\frac{2029749}{49136}} & {\frac{391166}{20741}}\\
\noalign{\medskip}-{\frac{387344}{32609}} & {\frac{77473}{4402}} & {\frac{304475}{15867}} & -{\frac{3049469}{26658}} & {\frac{445437}{2501}} & -{\frac{568723}{6514}}\\
\noalign{\medskip}{\frac{3084825}{27611}} & -{\frac{1006071}{9319}} & {\frac{171511}{4286}} & {\frac{3723623}{24721}} & -{\frac{1097313}{2915}} & {\frac{1207261}{5453}}\\
\noalign{\medskip}-{\frac{2254258}{5883}} & {\frac{3595213}{9561}} & -{\frac{1049674}{2379}} & {\frac{2081034}{4399}} & {\frac{1013365}{19943}} & -{\frac{1249672}{5459}}\\
\noalign{\medskip}{\frac{1915077}{2126}} & -{\frac{3165632}{3527}} & {\frac{5196992}{3543}} & -{\frac{3429722}{1327}} & {\frac{3839299}{2153}} & -{\frac{2755673}{9363}}\\
\noalign{\medskip}-{\frac{2860297}{1927}} & {\frac{3706169}{2627}} & -{\frac{5245388}{1929}} & {\frac{1764108}{317}} & -{\frac{6522639}{1366}} & {\frac{1111693}{833}}\\
\noalign{\medskip}{\frac{281944}{179}} & -{\frac{3174435}{2257}} & {\frac{5003871}{1621}} & -{\frac{7633149}{1117}} & {\frac{6098777}{958}} & -{\frac{9281007}{4606}}\\
\noalign{\medskip}-{\frac{2506157}{2481}} & {\frac{2704059}{3157}} & -{\frac{8285683}{3873}} & {\frac{6455381}{1295}} & -{\frac{4186545}{863}} & {\frac{3106817}{1912}}\\
\noalign{\medskip}{\frac{2072736}{5813}} & -{\frac{1425478}{4881}} & {\frac{3778762}{4529}} & -{\frac{980233}{486}} & {\frac{3100252}{1537}} & -{\frac{4063417}{5821}}\\
\noalign{\medskip}-{\frac{1051227}{19699}} & {\frac{745495}{17357}} & -{\frac{1839247}{13071}} & {\frac{1844827}{5276}} & -{\frac{2241089}{6290}} & {\frac{3813801}{30274}}
\end{array}\right].
\ee

\begin{Definition}
For $\al \in \J$, define functions $a_0(\al), b_0(\al)$, and $c_0(\al)$ by
\begin{align}
a_{0}\left(\alpha\right) & =  {\frac{3221}{1946}}-{\frac{797}{603}}\,\alpha+{\frac{176}{289}}\,{\alpha}^{2}\\
b_{0}\left(\alpha\right) & =  -{\frac{2763}{1765}}+{\frac{761}{284}}\,\alpha-{\frac{194}{237}}\,{\alpha}^{2}\\
c_{0}\left(\alpha\right) & =  {\frac{377}{1613}}+{\frac{174}{1357}}\,\alpha+{\frac{937}{6822}}\,{\alpha}^{2}
\end{align}
and a subset $\S_\al$ of $\RR^3$ by
\be{}
S_\al = \left\{ (a,b,c)\in\RR^3 : \sqrt{ (a-a_0(\al))^2 + \frac{1}{4}(b-b_0(\al))^2 + \frac{1}{4}(c-c_0(\al))^2 } \le \rho_0 \right\}
\ee
where $\rho_0  := 5\times10^{-4}$.
\end{Definition}

\begin{Definition} \label{def:t}
Given $a,b,c\in\RR$ with $a> 0$, define
\begin{align}
t(x;a,b) & = \frac{a}{2} \left(x+\frac{b}{a}\right)^2 \\
q_0(t;c) & = 2c\sqrt{t}e^{-t}I_0(t) + c^2 e^{-2t} (2J_0(t) - I_0 (t) - I_0^2(t))\\
\intertext{in which}
I_0(t) & = 1 - \sqrt{\pi t}\,e^t\,\erfc(\sqrt{t})\quad \text{and} \quad J_0(t) =  I_0(2t)
\end{align}
where erfc denotes the complementary error function.
\end{Definition}

\begin{Note} \label{note:findingT}
Let $\al\in\J$ be arbitrary but fixed and let
\begin{alignat}{2} 
a_{l}&=a_{0}\left(\alpha\right)-\rho_{0}&\,,\quad
a_{r}&=a_{0}\left(\alpha\right)+\rho_{0}\,,\\
b_{l}&=b_{0}\left(\alpha\right)-2\rho_{0}&\,,\quad
b_{r}&=b_{0}\left(\alpha\right)+2\rho_{0}\,,\\
c_{l}&=c_{0}\left(\alpha\right)-2\rho_{0}&\,,\quad
c_{r}&=c_{0}\left(\alpha\right)+2\rho_{0}\,.
\end{alignat}
Suppose $\left(a,b,c\right)\in\mathcal{S}_{\alpha}$. Then it follows that 
$a\in[a_{l},a_{r}]$,
$b\in[b_{l},b_{r}]$,
and $c\in[c_{l},c_{r}]$.
Since $a_{0}(\al)$ and $b_{0}(\al)$ are quadratic in $\alpha$,
simple calculations show that $a \in [1.5, 1.75]$ and $b \in [-1.75, -1.4]$, which implies that $\frac{b}{a} \ge -1.17$. 
In particular, the function $t(x):=t(x;a,b)$ maps bijectively the interval $x\in [\frac{5}{2},\infty)$ onto the interval $t\in[t_m,\infty)$ where $t_m := t(\frac{5}{2})$. 
Moreover, since $a_l>0$ and $b_r<0$, 
\begin{equation}
\frac{a_{l}}{2}\left(\frac{5}{2}+\frac{b_{l}}{a_{l}}\right)^{2}
< t_{m} <
\frac{a_{r}}{2}\left(\frac{5}{2}+\frac{b_{r}}{a_{r}}\right)^{2}.
\end{equation}
Since $\al$ was chosen arbitrary, we conclude that $t_m\in(t_{m,l},t_{m,r})$ where
\begin{align}
t_{m,l} & = \inf_{\alpha\in\mathcal{J}}\bigg\{ \frac{a_{l}}{2}\left(\frac{5}{2}+\frac{b_{l}}{a_{l}}\right)^{2}\bigg\}=1.962257\cdots \\
t_{m,r} & = \sup_{\alpha\in\mathcal{J}}\bigg\{ \frac{a_{r}}{2}\left(\frac{5}{2}+\frac{b_{r}}{a_{r}}\right)^{2}\bigg\}=2.043219\cdots . 
\end{align}
So, provided that $a>0$, the domain $t\in[T,\infty)$ where $1.96\le T \le t_{m,l}$ corresponds to the domain $x\in [-\frac{b}{a}+\sqrt{\frac{2T}{a}},\infty)$ which is guaranteed to include $x\in[\frac{5}{2},\infty)$. 
\end{Note}

The theorem below provides an approximate analytic representation of solution $F_\al$ to \eqref{eq:trans_blasius} and \eqref{eq:gen_ic} with $\al\in\J$ and $\ga=0$. 

\begin{Theorem} \label{theorem}
Let $\alpha\in\J$ and $\ga = 0$. Then there exists a unique triple $(a,b,c)=(a(\al),b(\al),c(\al))\in\S_\al$ such that the function $F_{0,\al}$ defined by
\be{}
F_{0,\al} (x)  =
\begin{dcases}
\al+\frac{x^{2}}{2}+x^{3}P\left(\frac{2}{5}x;\frac{25}{3}\alpha+\frac{1}{2}\right)\,, & x\in\left[0,\frac{5}{2}\right]\\
ax+b+\sqrt{\frac{a}{2t(x)}}q_{0}(t(x);c)\,, & x\in\left(\frac{5}{2},\infty\right)
\end{dcases}
\ee
is a representation of the actual solution $F_\al$ to the initial value problem \eqref{eq:trans_blasius} and \eqref{eq:gen_ic} within small errors. More precisely, the error term $E_\al(x):=F_\al(x) - F_{0,\al}(x)$ satisfies on $\I:=[0,\frac{5}{2}]$
\be{eq:error_finite}
\Vert E_\al^{\pp}\Vert_{\infty,\I}\le4.8916\times10^{-6}\,,\, 
\Vert E_\al^{\p}\Vert_{\infty,\I}\le3.7474\times10^{-6}\,,\,
\Vert E_\al\Vert_{\infty,\I}\le7.4947\times10^{-6}
\ee
and for $x>\frac{5}{2}$ 
\begin{gather} \label{eq:error_far}
\left|E_\al^{\pp}(x)\right|\le5.4901\times10^{-4}t^{-1}e^{-3t}\,,\,
\left|E_\al^{\p}(x)\right|\le9.8179\times10^{-5}t^{-3/2}e^{-3t}\,,\\
\left|E_\al(x)\right|\le1.7558\times10^{-5}t^{-2}e^{-3t},\nonumber
\end{gather}
where $t = t(x;a,b)$.
\end{Theorem}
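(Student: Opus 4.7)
I would follow the quasi-solution / fixed-point methodology of \cite{BlasiusCT}, now parameterized by $\al\in\J$ and with the two pieces of $F_{0,\al}$ handled separately. The overall plan has four steps: fix the triple $(a,b,c)$ by requiring the two pieces of $F_{0,\al}$ to match at $x=5/2$; bound the residual of this quasi-solution; rewrite the equation for $E_\al:=F_\al-F_{0,\al}$ as a fixed-point equation; and close the contraction on each piece.

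For Step 1, on each of $[0,5/2]$ and $[5/2,\infty)$ the function $F_{0,\al}$ is $C^{\infty}$ by construction, so requiring $C^2$-continuity across $x=5/2$ gives a nonlinear system $G_\al(a,b,c)=0$ of three equations (the jumps in value, first, and second derivative). I would verify that $G_\al(a_0(\al),b_0(\al),c_0(\al))$ is much smaller than the radii $(\rho_0,2\rho_0,2\rho_0)$ defining $\S_\al$ and that the Jacobian $\partial_{(a,b,c)}G_\al$ at the center is well conditioned uniformly in $\al\in\J$; a Kantorovich / Newton argument then yields a unique root in $\S_\al$. For Step 2, the residual $R_\al:=F_{0,\al}'''+F_{0,\al}F_{0,\al}''$ is a polynomial in $x$ on $[0,5/2]$, with coefficients polynomial in $\al$, which I would bound by explicit interval arithmetic; on $[5/2,\infty)$ the closed forms for $I_0(t)$ and $J_0(t)$ reduce $R_\al$ to a sum of terms of the form (power of $t$)$\cdot e^{-kt}$ with $k\ge 3$ --- the $e^{-t}$ and $e^{-2t}$ contributions cancel by construction of $q_0$ --- leading to $|R_\al(x)|\le C\,t^{-1}e^{-3t}$ with explicit $C$.

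For Step 3, with $(a,b,c)$ chosen as in Step 1 the error $E_\al\in C^2([0,\infty))$ satisfies
\[
 E_\al'''+F_{0,\al}\,E_\al''+F_{0,\al}''\,E_\al = -R_\al - E_\al E_\al''
\]
with $E_\al(0)=E_\al'(0)=E_\al''(0)=0$ and $E_\al'(x)\to 0$ as $x\to\infty$. The linear operator $\L:=D^3+F_{0,\al}D^2+F_{0,\al}''$ admits $u_1:=F_{0,\al}''$ as an explicit homogeneous solution (differentiate the Blasius equation). Reduction of order plus variation of parameters produces two further linearly independent solutions and, after imposing the initial and decay conditions, a Green's function and the integral reformulation $E_\al=\N[E_\al]$, where $\N$ has an affine part driven by $R_\al$ and a quadratic part coming from $E_\al E_\al''$. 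For Step 4, on $[0,5/2]$ I would close a contraction in $C^2$ with the sup norm, the ball radius matching \eqref{eq:error_finite}; on $[5/2,\infty)$ I would use the weighted norm
\[
 \|E\|_*:=\sup_{x\ge 5/2}\max\bigl\{ t^{2}e^{3t}|E|,\ t^{3/2}e^{3t}|E'|,\ t\,e^{3t}|E''|\bigr\},
\]
chosen to match \eqref{eq:error_far}. Since $F_{0,\al}\sim ax+b$ at infinity, two independent solutions of $\L u=0$ decay like $e^{-t}$ and $e^{-2t}$, so the Green's function is compatible with the $e^{-3t}$ weight and the contraction closes, producing \eqref{eq:error_far}.

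The hardest step will be the explicit construction of the Green's function for $\L$ on $[5/2,\infty)$ together with the weighted $*$-norm estimates, uniformly in $\al\in\J$ and $(a,b,c)\in\S_\al$. Nearly as delicate is the numerical verification: closing the contraction with the sharp constants $4.8916\times10^{-6}$, $3.7474\times10^{-6}$, $7.4947\times10^{-6}$ in \eqref{eq:error_finite} and the corresponding constants in \eqref{eq:error_far} requires the residual bounds of Step 2 and the conditioning estimates of Step 1 to be computed by rigorous interval / Taylor-model arithmetic with tight enclosures across the entire parameter range $\J\times\S_\al$.
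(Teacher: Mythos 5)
There is a genuine gap in your Step 1, and it propagates into Step 4. You propose to determine $(a,b,c)$ by forcing the two pieces of the \emph{quasi-solution} $F_{0,\al}$ to join in $C^2$ at $x=\frac{5}{2}$. That is not what pins down the triple, and it cannot be: the correct triple is the one for which the \emph{actual} solution of the initial value problem, continued past $\frac{5}{2}$, lies in the far-field family $ax+b+\sqrt{a/(2t)}\,q(t;c)$; for that triple the two pieces of $F_{0,\al}$ generically do \emph{not} match exactly (they differ by the sum of the left and right error terms, small but nonzero). The paper instead proves three separate statements: Proposition \ref{prop:finite} solves the IVP on $\I$ and controls $F_\al(\frac{5}{2}),F_\al^{\p}(\frac{5}{2}),F_\al^{\pp}(\frac{5}{2})$; Proposition \ref{prop:far} constructs, for \emph{each} admissible $(a,b,c)$, the unique far-field solution $F_2(\cdot\,;a,b,c)$ decaying to $ax+b$ (an integral equation integrated from $t=\infty$, with no data imposed at $\frac{5}{2}$); and Proposition \ref{prop:matching} determines $(a,b,c)$ as the unique fixed point in $\S_\al$ of the map $\Nf$ built from the three conditions that $F_2$ and its first two derivatives agree with the \emph{true} $F_\al$ at $\frac{5}{2}$. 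Your Step 1 spends the three free parameters on the wrong three conditions.

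The consequence surfaces in your far-field contraction. Once $(a,b,c)$ is fixed by your Step 1, the problem you pose on $[\frac{5}{2},\infty)$ --- a third-order equation for $E_\al$ with three Cauchy data inherited at $x=\frac{5}{2}$ \emph{and} decay at infinity --- is over-determined. Moreover your count of decaying homogeneous solutions is wrong: since $F_{0,\al}\sim ax+b$, the operator $\L=D^3+F_{0,\al}D^2+F_{0,\al}^{\pp}$ has only \emph{one} recessive solution at infinity (roughly $e^{-t}$); the other two behave like $1$ and $x$. (Also $F_{0,\al}^{\pp}$ is not a kernel element of $\L$; translation invariance gives $F_{0,\al}^{\p}$, and only modulo the residual.) Hence no Green's function for $\L$ on $[\frac{5}{2},\infty)$ can simultaneously respect three data at $\frac{5}{2}$ and the weight $e^{3t}$; with your choice of $(a,b,c)$ the true error would generically contain a component growing like $(a_{\mathrm{true}}-a)x$, making your $\|\cdot\|_*$ infinite. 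The repair is exactly the paper's architecture: keep $(a,b,c)$ free, solve the far-field problem from infinity (via the auxiliary function $h$ and the norm $\sup_{t\ge T}te^{2t}|h(t)|$ as in Section 4), and close a separate contraction in $(a,b,c)$ for the matching system (Lemma \ref{lem:matching}). Your Step 2 and your finite-interval contraction are consistent with the paper's Section 3.
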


The proof of Theorem \ref{theorem} relies on the following three propositions. 

\begin{Proposition} \label{prop:finite}
For each $\al\in\J$, let $F_{1,\al}(x)$ be the solution of \eqref{eq:trans_blasius} and \eqref{eq:gen_ic} on $\I$. Then the error term $E_\al(x)\equiv F_{1,\al}(x)-F_{0,\al}(x)$ verifies the equation 
\begin{align}
\L[E_\al] :&=E_\al^{\ppp}+F_{0,\al}E_\al^{\pp}+F_{0,\al}^{\pp}E_\al \label{eq:E_ivp}\\
& =-F_{0,\al}^{\ppp}-F_{0,\al}F_{0,\al}^{\pp}-E_\al E_\al^{\pp},\nonumber \\
E_\al(0)&=E_\al^{\p}(0)=E_\al^{\pp}(0)=0. \label{eq:E_ic}
\end{align}
for $x\in\I$ and satisfies the bounds given in \eqref{eq:error_finite}.
\end{Proposition}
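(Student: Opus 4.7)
The plan is to obtain \eqref{eq:E_ivp}--\eqref{eq:E_ic} by direct substitution, recast the problem as a fixed-point equation through a Green's function for $\L$, and then prove the bounds in \eqref{eq:error_finite} by a Banach contraction argument whose constants come from rigorous (interval) polynomial estimates.

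First, the derivation of the error IVP is routine. Since $F_{1,\al}$ solves \eqref{eq:trans_blasius} on $\I$ and $F_{1,\al}=F_{0,\al}+E_\al$, expanding $F_{1,\al}F_{1,\al}^{\pp}$ and cancelling linear pieces against $F_{0,\al}^{\ppp}+F_{0,\al}F_{0,\al}^{\pp}$ yields
\[
E_\al^{\ppp} + F_{0,\al}E_\al^{\pp} + F_{0,\al}^{\pp}E_\al \;=\; -F_{0,\al}^{\ppp} - F_{0,\al}F_{0,\al}^{\pp} - E_\al E_\al^{\pp}.
\]
Evaluating the explicit polynomial representation of $F_{0,\al}$ at $x=0$ gives $F_{0,\al}(0)=\al$, $F_{0,\al}^{\p}(0)=0$ and $F_{0,\al}^{\pp}(0)=1$, which matches \eqref{eq:gen_ic} with $\ga=0$; therefore $E_\al(0)=E_\al^{\p}(0)=E_\al^{\pp}(0)=0$.

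Next I would invert $\L$ with these homogeneous Cauchy data. Because $\L$ is a linear third-order operator with smooth coefficients on the compact interval $\I$, construction of a fundamental system $\{u_1,u_2,u_3\}$ for $\L[u]=0$ (for instance by rigorous power-series integration from $x=0$) together with variation of parameters produces a Green's function $G(x,s)$ and recasts the error problem as the fixed-point equation
\[
E_\al(x) \;=\; \int_0^x G(x,s)\bigl[R_\al(s) - E_\al(s)E_\al^{\pp}(s)\bigr]\,ds \;=:\; \mathcal{T}[E_\al](x),
\]
with residual $R_\al(x):=-F_{0,\al}^{\ppp}(x)-F_{0,\al}(x)F_{0,\al}^{\pp}(x)$.

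The rest of the proof is three quantitative estimates, uniform in $\al\in\J$: (a) a sharp upper bound for $\|R_\al\|_{\infty,\I}$, which comes from the explicit formula for $F_{0,\al}$ and is, by design of $P$ in \eqref{eq:poly}, of order $10^{-6}$; (b) $C^{2}$ bounds on $G(x,\cdot)$ and on its $x$-derivatives $\partial_x^k G$ for $k=0,1,2$, giving the norms of $\mathcal{T}$ and of the three derivative maps it induces on the Banach space
\[
X \;=\; \{E\in C^{2}(\I)\,:\, E(0)=E^{\p}(0)=E^{\pp}(0)=0\}
\]
with norm $\|E\|_X=\max\{\|E\|_{\infty,\I},\|E^{\p}\|_{\infty,\I},\|E^{\pp}\|_{\infty,\I}\}$; and (c) a Lipschitz bound for the quadratic nonlinearity $E\mapsto EE^{\pp}$ in $X$. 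The three separate coefficients in \eqref{eq:error_finite} then come from propagating (a) through the three kernels $\partial_x^k G$ individually. With (a)--(c) in hand, a standard Banach contraction argument on a small closed ball of $X$ of radius slightly larger than the leading linear image of $R_\al$ produces the unique $E_\al$ and the claimed inequalities.

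The main obstacle is step (a): honestly certifying $\|R_\al\|_{\infty,\I}\lesssim 10^{-6}$ uniformly in $\al\in\J$ cannot be done by symbolic floating-point evaluation; it requires rigorous interval arithmetic applied to the polynomial $F_{0,\al}^{\ppp}+F_{0,\al}F_{0,\al}^{\pp}$, with $\al$ itself treated as an interval over $[-3/50,3/50]$. The Green's-function bounds in (b) are likewise obtained via rigorous numerics, but because the coefficients of $\L$ are explicit and the quadratic term is a priori tiny, the contraction constant is manifestly below unity once the residual bound is secured, so the delicate point is really the uniform certification of $\|R_\al\|_{\infty,\I}$.
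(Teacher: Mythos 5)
Your overall strategy --- derive the error IVP by substitution, invert the principal part against the homogeneous Cauchy data, and run a contraction --- is the same as the paper's, and your derivation of \eqref{eq:E_ivp}--\eqref{eq:E_ic} is correct. But the two quantitative steps you defer to ``rigorous numerics'' are precisely where the content of the proposition lies, and your proposal does not supply the ideas the paper uses for them. For the residual, the paper does not invoke interval arithmetic: since $F_{0,\al}$ is a polynomial in $(x,\al)$, so is $R_\al$, and it is bounded by \emph{exact rational arithmetic} --- re-expanding $R$ on each cell $[x_{k-1},x_k]\times[\al_{l-1},\al_l]$ in scaled variables, extremizing the low-degree part by elementary calculus, and bounding the remaining coefficients by their $l^1$ norm (or, alternatively, summing absolute values of Chebyshev coefficients). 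Interval arithmetic is an acceptable substitute in principle, but a single global bound over $\I\times\J$ is too crude: the paper needs the cell-by-cell estimates \eqref{eq:RbndBegin}--\eqref{eq:RbndEnd} because the admissible residual size differs across subintervals.

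The more serious gap is your treatment of the resolvent. You propose to bound a Green's function $G(x,s)$ and its $x$-derivatives ``via rigorous numerics,'' which is not an argument and is not what the paper does. The paper never constructs $G(x,s)$; it works with the integral equation for $E_\al^{\pp}$ (not for $E_\al$ in a $C^2$ norm) and bounds the operator norm $M_\al=\|\G\|$ and the suprema $M_{j,\al}=\sup_x|\Phi_{j}^{\pp}(x)|$ \emph{analytically} by the energy method: multiply $\L[\phi]=r$ by $2\phi^{\pp}$, integrate, use $-2ab\le a^2+b^2$ together with the identity \eqref{12.15}, and apply Gronwall. This reduces everything to signs and integrals of the explicit polynomials $F_0^{\pp}$ and $G_1,G_2,G_3$, yielding \eqref{eq:SupBegin}--\eqref{eq:SupEnd} and Table \ref{tab:suprema}. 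Two further ingredients you omit are needed to reach the constants actually claimed in \eqref{eq:error_finite}: (i) the interval $\I$ must be split into $\I_1,\dots,\I_4$ and the contraction run successively on each piece with the Cauchy data propagated from the previous endpoint --- a single contraction on all of $[0,\frac{5}{2}]$ makes the Gronwall exponential, and hence $M$, several times too large to deliver the quoted $10^{-6}$-level bounds even though it would still contract; and (ii) uniformity in $\al\in\J$ is obtained by showing that the relevant quantities are decreasing in $\al$, so that all suprema are attained at $\al=-\frac{3}{50}$. Without a concrete replacement for the energy estimates and without the subinterval propagation, your argument establishes existence and smallness of $E_\al$ only with unspecified (and, as sketched, insufficient) constants.
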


\begin{Proposition} \label{prop:far}
Let $T\ge1.96$. Given $(a,b,c)$ with $a>0$, $|c|\leq\frac{1}{4}$, in the domain $x\in[-\frac{b}{a}+\sqrt{\frac{2T}{a}},\infty)$, which corresponds to the domain $t = t(x;a,b) \in [T,\infty)$, there exists a unique solution to \eqref{eq:trans_blasius} in the form
\be{}
F_2(x;a,b,c)=ax+b+\sqrt{\frac{a}{2t}}q(t;c)
\ee
where the function $q(t;c)$ satisfies the condition 
\be{}
\lim_{t\to\infty}\frac{q(t;c)}{\sqrt{t}}=0.
\ee
Furthermore, the function $\mathcal{E}(t;c) \equiv q(t;c) - q_0(t;c)$ satisfies the following bounds for $t\in[T,\infty)$:
\begin{gather}
\left|\mathcal{E}(t;c)\right|\le1.6955\times10^{-4}\frac{e^{-3t}}{9t^{3/2}}\\
\left|\mathcal{E}^{\p}(t;c)-\frac{1}{2t}\mathcal{E}(t;c)\right|\le1.6955\times10^{-4}\frac{e^{-3t}}{3t^{3/2}}\\
\left|\sqrt{t}\mathcal{E}^{\pp}(t;c)-\frac{1}{2\sqrt{t}}\mathcal{E}^{\p}(t;c)+\frac{1}{2t^{3/2}}\mathcal{E}(t;c)\right|\le1.6955\times10^{-4}\frac{e^{-3t}}{t}.
\end{gather}
\end{Proposition}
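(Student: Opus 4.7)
The strategy is to convert the third-order Blasius equation \eqref{eq:trans_blasius} under the ansatz into a third-order nonlinear ODE for $q(t;c)$, verify that $q_0(t;c)$ is an approximate solution with $O(e^{-3t})$ residual, and close the argument by a Banach fixed-point scheme in a weighted sup-norm on $[T,\infty)$. To begin, I substitute $F_2(x;a,b,c) = ax + b + \sqrt{a/(2t)}\, q(t;c)$ into \eqref{eq:trans_blasius}. Since $ax+b = \sqrt{2at}$ and $dt/dx = \sqrt{2at}$ on the stated $x$-domain, the chain rule plus division by common factors reduces the Blasius equation to $\L_0[q] = \N_0[q,q]$, where $\L_0$ is a linear third-order Oseen--Kummer type operator and $\N_0$ is a quadratic form carrying a $1/\sqrt{t}$ suppression from the factor $\sqrt{a/(2t)}$ in the nonlinear self-interaction. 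A direct check shows that the homogeneous solutions of $\L_0$ are $t$, $\sqrt{t}$, and $\sqrt{t}\, e^{-t} I_0(t)$, the last decaying like $e^{-t}/(2\sqrt{t})$ by the asymptotics of the complementary error function. The far-field condition $q(t)/\sqrt{t}\to 0$ kills both growing branches, so the space of admissible homogeneous solutions is one-dimensional and parameterized by $c$, corresponding precisely to the leading piece $2c\sqrt{t}\,e^{-t} I_0(t)$ of $q_0$.

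Next I compute the residual $R(t;c) := \L_0[q_0] - \N_0[q_0,q_0]$ by direct substitution, using the derivative identity for $I_0$ and the relation $J_0(t) = I_0(2t)$. The specific form of $q_0$ is engineered so that its $O(c)$ part, $2c\sqrt{t}\,e^{-t} I_0(t)$, is a homogeneous solution of $\L_0$ and cancels all $O(c\,e^{-t})$ contributions, while its $O(c^2)$ part $c^2 e^{-2t}(2J_0 - I_0 - I_0^2)$ cancels the $O(c^2 e^{-2t})$ contributions coming from $\N_0[q_0,q_0]$. What remains has order $e^{-3t}$; using $|I_0(t)|, |J_0(t)| \le 1$ and $|c| \le 1/4$, a routine calculation yields an explicit pointwise bound $|R(t;c)| \le K_R\, e^{-3t}/t^{3/2}$ for all $t \ge T \ge 1.96$. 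Writing $q = q_0 + \E$ recasts the problem as
\begin{equation*}
\L_0[\E] = -R(t;c) + \big(\N_0[q_0+\E,q_0+\E] - \N_0[q_0,q_0]\big),
\end{equation*}
and variation of parameters against the three-dimensional Wronskian of $\L_0$, combined with the decay condition $\E(t)/\sqrt{t}\to 0$ (which forces the coefficients of the two growing fundamental solutions to be integrated down from $\infty$), turns this into a Volterra integral equation $\E = \mathcal{T}[\E]$ with an explicit kernel $K(t,s)$ supported on $s \ge t$.

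The main obstacle, and the source of the explicit constant $1.6955\times 10^{-4}$, is to verify that $\mathcal{T}$ is a contraction on the closed ball $B = \{\E \in C[T,\infty) : \sup_{t \ge T} 9 t^{3/2} e^{3t}\,|\E(t)| \le 1.6955\times 10^{-4}\}$. This reduces to pointwise kernel estimates of the form
\begin{equation*}
\int_t^{\infty} |K(t,s)|\, s^{-3/2} e^{-3s}\, ds \le C\, t^{-3/2} e^{-3t}
\end{equation*}
for a sufficiently small constant $C$, together with a matching Lipschitz bound on the quadratic nonlinearity over $B$. These estimates follow from the known asymptotics of $I_0(t)$ and integration by parts against the $e^{-3s}$ weight, but the constants must be tight enough to close both the self-map and contraction conditions; it is precisely this tightness that pins down the threshold $T \ge 1.96$, the range $|c| \le 1/4$, and the radius $1.6955\times 10^{-4}$. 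Once the fixed point is secured, its $B$-norm yields the first pointwise inequality on $|\E|$. The bounds on the combinations $\E^{\p} - \E/(2t)$ and $\sqrt{t}\,\E^{\pp} - \E^{\p}/(2\sqrt{t}) + \E/(2t^{3/2})$ then follow by differentiating the integral equation and using the ODE to express the top-order term; these particular combinations are natural because, up to constant factors, they are the $x$-derivatives of the correction $\sqrt{a/(2t)}\,\E(t)$ to $F$ computed via $dt/dx = \sqrt{2at}$, and the shift from prefactor $1/9$ to $1/3$ to $1$ across the three bounds encodes the expected loss of one $t$-power per differentiation against the Green's kernel.
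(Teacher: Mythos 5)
Your overall strategy is sound, and the structural observations---that the homogeneous solutions of the linearized operator are $t$, $\sqrt{t}$, and $\sqrt{t}\,e^{-t}I_0(t)$, that the decay condition kills the two growing branches leaving the one-parameter family indexed by $c$, and that $q_0$ leaves an $O(e^{-3t})$ residual---all match what the paper uses. But your route differs from the paper's in one essential technical choice. The paper does not run a contraction on $\mathcal{E}$ itself via a third-order variation-of-parameters kernel; it introduces the auxiliary scalar $h(t)=e^t\bigl(\sqrt{t}\,\mathcal{E}^{\prime\prime}-\tfrac{1}{2\sqrt{t}}\mathcal{E}^{\prime}+\tfrac{1}{2t^{3/2}}\mathcal{E}\bigr)$ of \eqref{eq:h}, for which the third-order equation \eqref{eq:EpsilonDE} collapses to a \emph{first-order} scalar integral equation \eqref{eq:h_int}, with $\mathcal{E}$ recovered from $h$ by the explicit double integral \eqref{eq:Epsilon}; the contraction is performed in the norm $\|h\|=\sup_{t\ge T}te^{2t}|h(t)|$ of \eqref{defn:norm}, and $1.6955\times10^{-4}$ is a bound on $\|h\|$, not on a weighted norm of $\mathcal{E}$. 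This is exactly where your proposal is thinnest: you take the first displayed bound (on $|\mathcal{E}|$) as the definition of your ball and assert that the bounds on $\mathcal{E}^{\prime}-\mathcal{E}/(2t)$ and on $\sqrt{t}\,\mathcal{E}^{\prime\prime}-\cdots$ ``follow by differentiating the integral equation.'' That direction loses constants: a sup bound on $\mathcal{E}$ alone does not yield the top-order combination with prefactor $1$ (versus $1/9$) without further structure. The clean ladder $1/9\to1/3\to1$ arises because the three quantities are, in order, the double integral of $\tau^{-1/2}e^{-\tau}h$, the single integral, and $e^{-t}h$ itself, and each integration against the weight gains a factor $1/3$ as in \eqref{eq:norm_ineq}. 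To make your version airtight you would either have to build the top-order combination into your norm (at which point you have rediscovered $h$) or redo the kernel estimates for the differentiated equation and verify the constants by hand. Two smaller points: the nonlinear terms in \eqref{eq:q_eq} are suppressed by $1/t$, $1/t^2$, $1/t^3$ rather than $1/\sqrt{t}$; and the uniqueness asserted in the proposition (among all solutions with $q/\sqrt{t}\to0$, not merely within a small ball) rests in the paper on the convergent transseries representation from the cited general theory, which your argument would also need to invoke.
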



\begin{Proposition} \label{prop:matching}
For each $\al\in\J$, there exists a unique triple $(a,b,c)\in\mathcal{S}_{\al}$ so that the functions $F_1$ and $F_2$ in the previous two propositions and their first two derivatives agree at $x=\frac{5}{2}$. 
\end{Proposition}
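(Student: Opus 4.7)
The plan is to recast the three matching conditions at $x=\frac{5}{2}$ as a root-finding problem for a vector-valued map on $\S_\al$ and then apply the Banach fixed-point theorem via a Newton-type iteration. Set
\begin{equation*}
H(a,b,c)=\bigl(F_1(\tfrac{5}{2})-F_2(\tfrac{5}{2};a,b,c),\,F_1^{\p}(\tfrac{5}{2})-F_2^{\p}(\tfrac{5}{2};a,b,c),\,F_1^{\pp}(\tfrac{5}{2})-F_2^{\pp}(\tfrac{5}{2};a,b,c)\bigr),
\end{equation*}
so that the proposition amounts to proving that $H$ has a unique zero in $\S_\al$ for each $\al\in\J$.

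The first step is to bound the base residual $H(a_0(\al),b_0(\al),c_0(\al))$ uniformly in $\al\in\J$. On the left of the matching point, the values $F_1^{(k)}(\tfrac{5}{2})$ can be obtained from the explicit polynomial $F_{0,\al}(x)=\al+\tfrac{x^{2}}{2}+x^{3}P(\tfrac{2}{5}x;\tfrac{25}{3}\al+\tfrac{1}{2})$ together with the $E_\al$ estimates of Proposition \ref{prop:finite}. On the right, $F_2^{(k)}(\tfrac{5}{2};a_0,b_0,c_0)$ follows from the formula $ax+b+\sqrt{a/(2t)}q_0(t;c)$ combined with the $\E$-bounds of Proposition \ref{prop:far}, evaluated at $t=t_m$. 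The rational matrix $[p_{i,j}]$ and the quadratics $a_0,b_0,c_0$ were engineered precisely so that these three matching residuals are much smaller than $\rho_0$, uniformly in $\al\in\J$.

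The second step is to compute and bound the Jacobian $J(a,b,c)=-\partial F_2/\partial(a,b,c)$ of $H$ at $x=\tfrac{5}{2}$ uniformly on $\S_\al$, noting that $F_1$ is independent of $(a,b,c)$. Each partial derivative of $F_2^{(k)}$ can be read off from the representation $F_2=ax+b+\sqrt{a/(2t)}(q_0+\E)$ together with the closed forms for $q_0,I_0,J_0$ and the sensitivity bounds on $\E$ already established in Proposition \ref{prop:far}. One then verifies that $J$ is uniformly invertible on $\S_\al$ with well-controlled inverse under the weighted norm $\lv(u,v,w)\rv_{\ast}:=\sqrt{u^{2}+\tfrac{1}{4}v^{2}+\tfrac{1}{4}w^{2}}$ appearing in the definition of $\S_\al$.

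The third step is the fixed-point argument. Define $N(a,b,c)=(a,b,c)-J(a_0,b_0,c_0)^{-1}H(a,b,c)$. Combining the smallness of $H(a_0,b_0,c_0)$ from step one with the Lipschitz control of $J-J(a_0,b_0,c_0)$ on $\S_\al$ from step two shows that $N$ maps $\S_\al$ into itself and is a contraction with respect to $\lv\cdot\rv_{\ast}$; the Banach fixed-point theorem then furnishes the unique triple $(a,b,c)\in\S_\al$. The principal obstacle is the rigorous interval arithmetic needed in steps one and two, in particular evaluating the Jacobian entries involving $\erfc(\sqrt{t_m})$-type quantities and tracking the asymmetric weights $1,\tfrac{1}{4},\tfrac{1}{4}$ tightly enough to ensure the contraction holds simultaneously in all three directions while the image $N(\S_\al)$ remains inside $\S_\al$.
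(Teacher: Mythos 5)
Your proposal is correct in its overall architecture and lands on the same underlying tool as the paper --- a Banach fixed-point argument on the ball $\S_\al$ with the weighted norm $\sqrt{u^2+\tfrac14 v^2+\tfrac14 w^2}$, requiring (i) smallness of the residual at the center $(a_0(\al),b_0(\al),c_0(\al))$ and (ii) a uniform contraction bound over $\S_\al$, both verified by rigorous rational/interval arithmetic uniformly in $\al\in\J$. The genuine difference is in how the contraction map is built. You form the defect $H=(F_1^{(k)}(\tfrac52)-F_2^{(k)}(\tfrac52;a,b,c))_{k=0,1,2}$ and precondition with the frozen inverse Jacobian, $N(a,b,c)=(a,b,c)-J(a_0,b_0,c_0)^{-1}H(a,b,c)$, a Newton--Kantorovich scheme that additionally requires proving uniform invertibility of $J$ and a Lipschitz bound on $J-J(a_0,b_0,c_0)$. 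The paper instead rearranges the three matching equations algebraically so that each unknown is isolated --- $a=N_1(a,b,c)$ from the first-derivative match, $b=N_2(a,b,c)$ from the value match, $c=N_3(a,b,c)$ from the second-derivative match --- and then shows directly that the Jacobian of the rescaled map $\Nf$ has Euclidean norm at most $\beta\le 0.8381<1$ on $\S_{\Af,\al}$, with residual $\lv \Af_{0,\al}-\Nf[\Af_{0,\al}]\rv_2\le 4.1443\times 10^{-5}\le(1-\beta)\rho_0$. The paper's route avoids inverting a $3\times3$ matrix of interval quantities (one fewer source of bound inflation); your route is more generic and does not depend on the equations admitting a naturally contractive rearrangement, at the cost of the extra invertibility and Lipschitz verifications.

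One concrete overstatement to fix: you assert that the Jacobian entries involving $\E$ can be controlled by ``the sensitivity bounds on $\E$ already established in Proposition \ref{prop:far}.'' That proposition bounds $\E$ and combinations of its $t$-derivatives, but says nothing about $\partial\E/\partial c$ (equivalently $\partial h/\partial c$) or about the dependence of $\E$ on $(a,b)$ through $t_m$. Differentiating the matching conditions in $c$ genuinely requires a separate estimate on $\partial_c h$, obtained by differentiating the integral equation \eqref{eq:h_int} in $c$ and running another contraction/Gronwall-type bound; the paper faces the same need and defers it to \cite{BlasiusCT}. Without that extra lemma your step two is incomplete as stated.
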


\noindent \textbf{The proof }of Theorem \ref{theorem} follows from Propositions \ref{prop:finite}--\ref{prop:matching} as follows: Proposition \ref{prop:finite} implies that for any $\al\in\J$, $F_{1,\al}(x) = F_{0,\al}(x) + E_\al(x)$ satisfies \eqref{eq:trans_blasius} and \eqref{eq:gen_ic} for $x\in\I$. Note that $F_{0,\al}$ satisfies the initial conditions $F_{0,\al}(0) = \al$, $F_{0,\al}^{\p}(0) = 0$, and $F_{0,\al}^{\pp}(0) = 1$. 

Proposition \ref{prop:far} implies that $F_2(x;a,b,c) = ax + b + \sqrt{\frac{a}{2t}}[q_0(t;c) + \E(t;c)]$, where $t = t(x;a,b)$, satisfies \eqref{eq:trans_blasius} in the domain of $x$ that includes $[\frac{5}{2},\infty)$ when $(a,b,c)\in\S_\al$.
Proposition \ref{prop:matching} ensures that both $F_1$ and $F_2$ solve the same ODE \eqref{eq:trans_blasius}. Furthermore, identifying $F_{0,\al}(x)$ and $E_\al(x)$ from Theorem \ref{theorem} for $x\in(\frac{5}{2},\infty)$ with $ax + b + \sqrt{\frac{a}{2t}}q_0(t;c)$ and $\sqrt{\frac{a}{2t}}\E(t;c)$ respectively, and relating $x$-derivatives to $t$-derivatives via $t = t(x;a,b)$, the error bounds \eqref{eq:error_far} follows from Proposition \ref{prop:far}. \hfill \qed
The proofs of Propositions \ref{prop:finite}--\ref{prop:matching} are presented in the following sections.

\section{Solution in the finite domain $\I=[0,\frac{5}{2}]$ and proof of Proposition \ref{prop:finite}}
The quasi-solution $F_0$ in the compact set $\I$ is obtained by fitting numerical solutions of \eqref{eq:trans_blasius} on $\I$ with high accuracy satisfying various initial conditions. More precisely, let $N=50$, $\al_k := -\frac{3}{50} + \frac{3}{25}\frac{k}{N}$ for $k=0,1,\dots,N$, and $\widehat{F_{0,k}}$ be the numerical solution of 
\begin{gather}
F^{\ppp}(x) + F(x)F^{\pp}(x) = 0,\quad \text{for} \quad x\in\I \\
F(0) = \al_k, \quad F^{\p}(0) = 0, \quad F^{\ppp}(0) = 1
\end{gather}
with absolute errors of the order $10^{-16}$. 
Since numerical differentiation is ill-conditioned, we project the the third derivative of the numerical solutions, rather than the solutions themselves, onto the subspace spanned by first several Chebyshev polynomials to obtain the set of $N$ approximate third derivatives written in the form
\be{}
P_{k}(x) = \sum_{n=0}^{M} c_{k,n}x^n,\quad k=0,\dots,N.
\ee
We then fit the coefficients $(c_{k,n})_{k=0}^{N}$ against $(\al_k)_{k=0}^{N}$ by degree 5 polynomial $c_n(\al)$ for all $n=0,\dots,M$ and write 
\be{}
P(x;\al) = \sum_{n=0}^{M} c_n(\al) x^n.
\ee{}
This is how the polynomial \eqref{eq:poly} is obtained. 

We seek to control the error term $E_\al(x)$ on the interval $\I$ uniformly in $\al\in\J$ by first estimating the size of the residual
\be{}
R_\al(x) = F_{0,\al}^{\ppp}(x) + F_{0,\al}(x)F_{0,\al}^{\pp}(x).
\ee
\begin{Notation}
In the following analysis, we will use two notations, for instance, $R_\al(x)$ and $R(x;\al)$ interchangeably. To be more precise, in places where we view the parameter $\al$ as another variable, we regard $R_\al(x)$ as a function $R(x;\al)$ of two variables $x$ and $\al$ on $\I \times \J$. In such a case, the derivatives with respect to $x$ will be denoted by prime notations, e.g., $\partial_x R(x;\al) = R^{\p}(x;\al)$; the derivatives with respect to $\al$ will be denoted using the partial derivative symbols, e.g., $\partial_\al R(x;\al)$.
\end{Notation}
We then invert the principal part of $\L[E]$ in Proposition \ref{prop:finite} by using initial conditions to obtain a nonlinear integral equation. The smallness of $R$ and careful bounds on the resolvents allow us to use a contractive mapping argument to draw the desired conclusion. 
\subsection{Estimating sizes of the the quasi-solution and the residual on $\I$} \label{subsec:estimation}
In this subsection, two methods are used to estimate sizes of the quasi-solution and its derivatives as well as the residual. 

\subsubsection{Estimation using local Taylor series expansion}
Since $F_{0,\al}(x)$, now viewed as $F_0(x;\al)$,  is a polynomial of $x$ and $\al$ of degree 16 and 5 respectively, $R$ is a polynomial of $x$ and $\al$ of degree 30 and 10 respectively:
\be{}
R(x;\al) = \sum_{m=0}^{30} \sum_{n=0}^{10} c_{m,n} \al^n x^m.
\ee
Based on how rapidly $R(x;\al)$ changes in $\I$ and in $\J$, we choose $\{x_k\}_{k=0}^{15} \in \I$ given by 
\[
\{ 0, 0.0625, 0.125, 0.25, 0.375, 0.5, 0.75, 1.0, 1.25, 1.4, 1.5, 1.75, 2, 2.25, 2.4, 2.5 \}.
\]
and $\{\al_l\}_{l=0}^{5} \in \J$ given by 
\[
\{ -0.06, -0.05, -0.02, 0.02, 0.05, 0.06 \}.
\]
We will show that $R$ is small in the sense that the norm given by 
\be{}
\lv R \rv_{\infty,\I\times\J} := \sup \left\{ \lb R(x;\al) \rb : x\in\I, \al\in\J \right\} 
\ee
is small. 

On each subregion $[x_{k-1},x_{k}] \times [\al_{l-1},\al_{l}]$ in $\I\times\J$, re-expand $R$ in the scaled variables $\tilde{x}_k$ and $\tilde{\al}_l$ where
\be{}
\al = \frac{\al_l + \al_{l-1}}{2}+\frac{\al_l - \al_{l-1}}{2}\tilde{\al}_l \quad \text{and} \quad x = \frac{x_k + x_{k-1}}{2}+\frac{x_k - x_{k-1}}{2}\tilde{x}_k.
\ee
Note that both $\tilde{\al}_l$ and $\tilde{x}_k$ are in $[-1,1]$. So we have
\begin{align} \label{eq:expansion}
R(x;\al) & = \sum_{m=0}^{30} \sum_{n=0}^{10} c_{m,n}^{(k,l)} \tilde{\al}_l^n\tilde{x}_k^m \nonumber\\
& = \sum_{m=0}^3 c_{m,0}^{(k,l)} \tilde{x}_k^m + \sum_{n=1}^3 c_{0,n}^{(k,l)}\tilde{\al}_l^n 
+ \sum_{m,n} c_{m,n}^{(k,l)} \tilde{\al}_l^n\tilde{x}_k^m
\end{align}
where the last sum is a double summation over all indices left out from the first two. Observe that the first two terms are single-variable cubic polynomials in $\tilde{x}_k$ and $\tilde{\al}_l$ respectively. So we can determine the maximum $M_{k,l}$ and the minimum $m_{k,l}$ of their sum in $\tilde{x}_k \in [-1,1]$ and $\tilde{\al}_l \in [-1,1]$ using calculus. The remaining term in \eqref{eq:expansion} is bounded by its $l^1$-norm:
\be{}
E_{k,l} := \sum_{m,n} \lb c_{m,n}^{(k,l)} \rb.
\ee
It follows that on $[x_{k-1},x_k]\times[\al_{l-1},\al_l]$,
\be{}
m_{k,l} - E_{k,l} \le R(x;\al) \le M_{k,l} + E_{k,l}\,.
\ee
The maximum and minimum over an arbitrary union of subregions are found by taking the minimum of $m_{k,l} - E_{k,l}$ and the maximum of $M_{k,l} + E_{k,l}$ over the appropriate indices $k$ and $l$. Note that, though elementary and tedious, these computations are executed easily and exactly with the aid of a computer algebra system since they only involve operations with rational numbers. 

Let $\I_k$ be defined by
\be{}
\I_1 = [0,1.25]\,, \quad \I_2 = [1.25,1.4]\,, \quad \I_3 = [1.4,2]\,, \quad \I_4 = [2,2.5]\,.
\ee
Using the method outlined above, we obtain estimates of the size of $R$ on subregions $\I_k\times \J$:
\begin{align} 
- 4.9058 \times 10^{-7} & \le R(x;\al) \le 5.1794 \times 10^{-7} \quad \text{on $\I_1\times\J$},\label{eq:RbndBegin}\\
- 8.4748 \times 10^{-8} & \le R(x;\al) \le 7.5413 \times 10^{-7} \quad \text{on $\I_2\times\J$},\\
 1.1011 \times 10^{-7} & \le R(x;\al) \le 1.3040 \times 10^{-6} \quad \text{on $\I_3\times\J$},\\
 4.9134 \times 10^{-7} & \le R(x;\al) \le 2.9344 \times 10^{-6} \quad \text{on $\I_4\times\J$} \label{eq:RbndEnd}.
\end{align}
This implies that $\lv R \rv_{\infty,\I \times \J} \le 2.9344 \times 10^{-6}$. The same method is used to estimate the size of $F_0$ on the subregions:
\begin{align}
-0.0601 & \le F_0(x;\al) \le 0.8004 \quad \text{on $\I_1\times\J$},\\
0.7157  &\le F_0(x;\al) \le 0.9753 \quad \text{on $\I_2\times\J$},\\
0.9039  &\le F_0(x;\al) \le  1.7938  \quad \text{on $\I_3\times\J$},\\
1.7819  &\le F_0(x;\al) \le  2.6220  \quad \text{on $\I_4\times\J$}\,.
\end{align}
The size of $F_0^{\p}$ is similarly estimated:
\begin{align}
-0.0001 & \le F_0^{\p}(x;\al) \le 1.1990 \quad \text{on $\I_1\times\J$},\\
1.1179  &\le F_0^{\p}(x;\al) \le 1.3091 \quad \text{on $\I_2\times\J$},\\
1.2134  &\le F_0^{\p}(x;\al) \le  1.6066  \quad \text{on $\I_3\times\J$},\\
1.4660  &\le F_0^{\p}(x;\al) \le  1.7036  \quad \text{on $\I_4\times\J$}\,.
\end{align}
Lastly, the estimates of $F_0^{\pp}$ are given:
\begin{align}
0.6770 & \le F_0^{\pp}(x;\al) \le 1.0144 \quad \text{on $\I_1\times\J$},\\
0.5927  &\le F_0^{\pp}(x;\al) \le 0.7778 \quad \text{on $\I_2\times\J$},\\
0.2599  &\le F_0^{\pp}(x;\al) \le  0.6890  \quad \text{on $\I_3\times\J$},\\
0.0881  &\le F_0^{\pp}(x;\al) \le  0.3099  \quad \text{on $\I_4\times\J$}\,.
\end{align}

\Remark{For any given $x\in \I$, $F_0, F_0^{\p}, F_0^{\pp}$ behave ``almost linearly" in $\al \in \J$. So we can still obtain fairly reasonable estimates without subdividing $\J$ as presented above.}

\subsubsection{Alternate method using Chebyshev polynomials}
Alternatively, we can find a bound on $\lv R \rv _{\infty,\I\times\J}$ by projecting it onto the orthogonal space of Chebyshev polynomials. To be more precise, take the Chebyshev expansions of the monomials $x^m$ and $\al^n$ on $\I$ and $\J$ respectively and write
\be{}
x^m = \sum_{i=0}^m p_{m,i}T_i(\tilde{x}) \text{  and  } \al^n = \sum_{j=0}^n q_{n,j}T_j(\tilde{\al}),
\ee
where $T_k$ is the Chebyshev polynomial of the first kind with degree $k$, $\tilde{x} = \frac{4x}{5}-1$, and $\tilde{\al}=\frac{50\al}{3}$. On substitution, we can rewrite $R$ as 
\begin{align}
R(x;\al) & = \sum_{i=0}^{30} \sum_{j=0}^{10} r_{i,j} T_j(\tilde{\al}) T_i(\tilde{x})\,,\\
\intertext{where}
r_{i,j} & = \sum_{m=i}^{30} \sum_{n=j}^{10} c_{m,n} p_{m,i} q_{n,j}\,.
\end{align}
Since $\lb T_k(y) \rb = \lb \cos(k \cos^{-1} y) \rb \le 1$ for $y \in [-1,1]$, it immediately follows that 
\be{eq:R_cheby}
\lv R \rv_{\infty,\I\times\J} \le \sum_{i=0}^{30} \sum_{j=0}^{10} \lb r_{i,j} \rb.
\ee
Using a computer algebra system, we obtain that $\lv R \rv_{\infty,\I\times\J} \le 3.5551 \times 10^{-6}$. Projecting $R$ to Chebyshev polynomials in each of the subregions $\I_k\times\J$ yields somewhat better bounds:
\begin{align}
\lv R \rv_{\infty,\I_1\times\J} \le 5.3776 \times 10^{-7}, &\qquad
\lv R \rv_{\infty,\I_2\times\J} \le 9.6144 \times 10^{-7}, \\
\lv R \rv_{\infty,\I_3\times\J} \le 1.5004 \times 10^{-6}, &\qquad
\lv R \rv_{\infty,\I_4\times\J} \le 2.9505 \times 10^{-6}.
\end{align}
Observe that these bounds are not as sharp as the ones obtained in \eqref{eq:RbndBegin}--\eqref{eq:RbndEnd}. Yet, this method is simpler and more easily adapted.
\subsection{Properties of some functions used in the subsequent sections.} \label{subsec:functions}
In this subsection, we will show that, for any fixed $\al \in \J$, each of the functions 
\be{}
G_1:=2F_0^{\pp} - 2F_0\,, \quad G_2:=F_0^{\pp} - 2F_0\,, \quad G_3:=F_0^{\pp} - 2F_0 + 1\,,
\ee
has a unique zero in the interval $\I$. 

We consider $G_3$ first. Based on the calculations above, this function is positive on $\I_1$. In addition, using the bounds of $R$, we see that its derivative $F_0^{\ppp} - 2F_0 = -F_0 F_0^{\pp} - 2F_0 +R$ is negative on $\I_2\cup \I_3 \cup \I_4$. Thus, $G_3$ has at most one root in $\I$. Now, the function $G_3(1.25;\al)$ is a polynomial in $\al$ of order 5. Applying the method given in Subsection \ref{subsec:estimation}, we obtain that 
\be{}
0.0781 \le G_3(1.25;\al) \le 0.3463\,\text{ whereas }
-0.3564 \le G_3(1.4;\al) \le - 0.1190,
\ee
for all $\al \in \J$, which means that for any given $\al\in\J$, the values of $G_3$ at $x=1.25$ and $x=1.4$ have the opposite signs. So by the intermediate value theorem, there exists a unique zero of $G_3$ between the two numbers. Similarly, we can show that there is a unique zero of $G_1$ in $\I$ between $x=1.15$ and $x=1.3$ and that $G_2$ has its only zero in $\I$ between $x=0.85$ and $x=1.05$.

\subsection{The error estimation using the energy method}
Let $\al \in \J$ be fixed and consider the linear (generally) inhomogeneous equation
\be{}
\L [\phi](x) := \phi^{\ppp}(x) + F_0(x;\al) \phi^{\pp}(x) + F_0^{\pp}(x;\al) \phi(x) = r(x)
\ee
over an arbitrary subinterval $[x_l,x_r] \subset \I$, with known initial conditions $\phi(x_l), \phi^{\p}(x_l)$, and $\phi^{\pp}(x_l)$. 
The solution to this equation is given by the standard variation of parameter formula:
\be{}
\phi(x) = \sum_{j=1}^{3} \phi^{(j-1)}(x_l)\Phi_{j,\al}(x) + \sum_{j=1}^{3}\Phi_{j,\al}(x)\int_{x_l}^{x} \Psi_{j,\al}(t) r(t) \,dt
\ee
where $\left \{ \Phi_{j,\al} \right \}_{j=1}^3$ form a 
fundamental set of solutions to $\mathcal{L_\al}[\phi]=0$ 
and $\left \{\Psi_{j,\al} \right \}_{j=1}^3$ are elements 
of the inverse of the fundamental matrix
constructed from the $\Phi_{j,\al}$ and their derivatives. (In what follows, for the sake of notational simplicity, we will suppress the $\al$-subscript but remember that these fundamental solutions depend on $\al$.) Since we seek to find the bounds on $\lv \phi \rv_\infty = \lv \phi \rv_{\infty,[x_l,x_r]}$, the precise expressions are unimportant. Rather, we proceed by differentiating $\phi$ twice using properties of 
$\Phi_j$ and $\Psi_j$ \footnote{In particular, 
$\sum_{j=1}^3 \Phi_j (x) \Psi_j (x) =0$, 
$\sum_{j=1}^3 \Phi_j^\prime (x) \Psi_j (x)=0$}
to obtain
\begin{equation}
\label{12.3.1}
\phi^{\prime \prime} (x) = 
\sum_{j=1}^3 \phi^{(j-1)} (x_l) \Phi_j^{\prime \prime} (x) 
+ \sum_{j=1}^3 \Phi_j^{\prime \prime} (x) \int_{x_l}^x \Psi_j (t) r(t) \,dt \,.
\end{equation}
Rewrite \eqref{12.3.1} by abstractly replacing the second term by an operator $\G$:
\begin{equation}
\label{12.4} 
\phi^{\prime \prime} (x) = \sum_{j=1}^3 \phi^{(j-1)} (x_l) \Phi_j^{\prime \prime} (x) + \mathcal{G} \left [ r \right ] (x).  
\end{equation}
From general properties of fundamental
matrix and its inverse for the linear ODEs with polynomial coefficients, 
$\mathcal{G}$ is a bounded linear operator 
on $\mathrm{C}([x_l, x_r])$; 
denote its norm by $M_\al$,
\begin{equation}
  \label{eq:defM}
  M_\al=\|\mathcal{G}\|\,.
\end{equation}
 Then, on the interval $[x_l, x_r]$, we have
\begin{equation}
\label{12.5}
\| \phi^{\prime \prime} \|_\infty \le  \sum_{j=1}^\infty M_{j,\al} \Big | \phi^{(j-1)} (x_l) \Big | 
+ M_\al \| r \|_{\infty}\quad \text{where} \quad M_{j,\al}
= \sup_{x \in [x_l,x_r]} 
\Big | \Phi_j^{\prime \prime} (x) \Big |. 
\end{equation}

To determine bounds on $M_{j,\al}$ and $M_\al$, we use the ``energy method": take the original ODE
\begin{equation}
\label{12.13}
\phi^{\prime \prime \prime} + F_0 \phi^{\prime \prime} + F_0^{\prime \prime} \phi = r\,,
\end{equation}
multiply it by $2 \phi^{\prime \prime}$, and then integrate from $x_l$ to $x$ using the known initial conditions to obtain
\begin{multline}
\label{12.14}
\left ( \phi^{\prime \prime} (x) \right )^2 = \left ( \phi^{\prime \prime} (x_l) \right )^2 \\
- \int_{x_l}^x \left \{ 2 F_0 (y;\al) \left ( \phi^{\prime \prime} (y) \right )^2  
+
2 F_0^{\prime \prime} (y;\al) \phi^{\prime \prime} (y) \phi (y) - 
2 \phi^{\prime \prime} (y) r(y) \right \}\, dy\,.
\end{multline}     

Note that we can express $\phi(x)$ in terms of $\phi^{\prime \prime} (x)$ by using integration by parts along with the known $\phi (x_l)$ and $\phi^\prime (x_l)$:
\begin{equation}
\label{12.15}
{\tilde \phi} (x) := 
\phi(x) - \phi (x_l) - (x-x_l) \phi^\prime (x_l) = \int_{x_l}^x (x-y) \phi^{\prime \prime} (y)\, dy.
\end{equation}
Using \eqref{12.15}, the equation \eqref{12.14} is now written as
\begin{multline}
\label{12.14.1}
\left ( \phi^{\prime \prime} (x) \right )^2 = \left ( \phi^{\prime \prime} (x_l) \right )^2 
- \int_{x_l}^x 2 F_0^{\prime \prime} (y;\al) 
\left [ \phi(x_l) + (y-x_l) \phi^{\prime} (x_l) \right ] \phi^{\prime \prime} (y)\, dy \\
- \int_{x_l}^x \left \{ 2 F_0 (y;\al) \left ( \phi^{\prime \prime} (y) \right )^2  
+
2 F_0^{\prime \prime} (y;\al) \phi^{\prime \prime} (y) {\tilde \phi} (y) - 
2 \phi^{\prime \prime} (y) r(y) \right \}\, dy.
\end{multline}  
Since the ODE of our interest is linear, we may consider separately the following cases to determine the bounds of $M_j$ and $M$:
\begin{itemize}
\item $r =0$, $\phi(x_l) =1$, $\phi^{\p} (x_l) =0$, $\phi^{\pp} (x_l) =0$;
\item $r =0$, $\phi(x_l) =0$, $\phi^{\p} (x_l) =1$, $\phi^{\pp} (x_l) =0$;
\item $r =0$, $\phi(x_l) =0$, $\phi^{\p} (x_l) =0$, $\phi^{\pp} (x_l) =1$;
\item $r \neq0$, $\phi(x_l) =0$, $\phi^{\p} (x_l) =0$, $\phi^{\pp} (x_l) =0$.
\end{itemize}
Using the simple inequality $-2ab \le a^2 + b^2$, the relation \eqref{12.15}, and Gronwall's inequality, it is shown (see \cite{BlasiusCT} for details) that
\begin{align}
M_{1,\al} 
& \le ((F_0^{\p}(x_r;\al) - F_0^{\p}(x_l;\al))^{1/2} \exp \left[ \frac{1}{2} \int_{x_l}^{x_r} Q_1(y;\al)\,dy \right]\,, \label{eq:SupBegin}\\
M_{2,\al}
& \le \left( \int_{x_l}^{x_r} (y-x_l)^2F_0^{\pp}(y;\al)\,dy \right)^{1/2} \exp \left[ \frac{1}{2} \int_{x_l}^{x_r} Q_1(y;\al)\,dy \right]\,,\\
M_{3,\al}
& \le \exp \left[ \frac{1}{2} \int_{x_l}^{x_r} Q_2(y;\al)\,dy \right]\,, \\
M_{\al}
& \le (x_r - x_l)^{1/2} \exp \left[ \frac{1}{2} \int_{x_l}^{x_r} Q(y;\al)\,dy \right]\,,\label{eq:SupEnd}
\end{align}
where
\begin{align}
Q_1(x;\al) & =
	\begin{dcases}
	\frac{(x-x_l)^4}{4} F_0^{\pp}(x;\al)+G_1(x;\al) & \text{if } G_1(x;\al) > 0\\
	\frac{(x-x_l)^4}{4} F_0^{\pp}(x;\al) & \text{if } G_1(x;\al) \le 0,
	\end{dcases} \\
Q_2(x;\al) & =
	\begin{dcases}
	\frac{(x-x_l)^4}{4} F_0^{\pp}(x;\al)+G_2(x;\al) & \text{if } G_2(x;\al) > 0\\
	\frac{(x-x_l)^4}{4} F_0^{\pp}(x;\al) & \text{if } G_2(x;\al) \le 0,
	\end{dcases} \\
Q(x;\al) & =
	\begin{dcases}
	\frac{(x-x_l)^4}{4} F_0^{\pp}(x;\al)+G_3(x;\al) & \text{if } G_3(x;\al) > 0\\
	\frac{(x-x_l)^4}{4} F_0^{\pp}(x;\al) & \text{if } G_3(x;\al) \le 0.
	\end{dcases}
\end{align}
(See Section \ref{subsec:functions} for the definition of $G_j$'s.)

Now, using the estimation method introduced in Subsection \ref{subsec:estimation}, we can show that the $\al$-derivatives of the following functions
\be{}
F_0^{\p}(x;\al) - F_0^{\p}(x_l;\al)\,,\,
F_0^{\pp}(x;\al)\,,\,
G_1(x;\al)\,,\,
G_2(x;\al)\,,\,
G_3(x;\al)\,,
\ee
are all negative for any given $x\in\I$. This implies that  these functions are decreasing in $\al$ on the interval $\J$ and thus they attain the maximal values at $\al=-\frac{3}{50}$ for any $x\in\I$. This allows us to uniformly bound $M_{j,\al}$ and $M_\al$ by $M_j$ and $M$ respectively. The results are summarized in Table \ref{tab:suprema}.


\begin{table}
\caption{The bounds of various suprema on subregions $\I_k\times\J$.}
\label{tab:suprema}
\begin{center}
\begin{tabular}{|c||c|c|c|c|}
\hline                              
 & $M$ & $M_{1}$ & $M_{2}$ & $M_{3}$\tabularnewline
\hline
\hline
$\I_{1}\times \J$ & $3.1930$ & $3.0482$ & $2.1323$ & $1.5886$\tabularnewline
$\I_{2}\times \J$ & $0.3912$ & $0.3323$ & $0.0284$ & $1.0001$\tabularnewline
$\I_{3}\times \J$ & $0.7762$ & $0.5465$ & $0.1701$ & $1.0020$\tabularnewline
$\I_{4}\times \J$ & $0.7077$ & $0.3120$ & $0.0775$ & $1.0008$\tabularnewline
\hline
\end{tabular}
\par\end{center}
\end{table}

\subsection{The existence of solution and the error estimates}
Using the results from the previous subsection, we can now not only show the existence and the uniqueness of the error $E_\al$ in the decomposition of the solution \be{eq:finite_soln}
F_\al(x) = F_{0,\al}(x) + E_\al(x),
\ee
but also show that it is small uniformly in $\al$ on an arbitrary subinterval $[x_l,x_r]\subset \I$. Suppose $E_\al(x_l),E_\al^{\p}(x_l)$, and $E_\al^{\pp}(x_l)$ are known. Then on $[x_l,x_r]$, $E_\al$ satisfies
\be{}
\L_\al[E_\al]=-E_\al E_\al^{\pp} - R_\al
\ee
where $R_\al = F_\al^{\ppp} + F_\al F_\al^{\pp}$. As in \eqref{12.4}, this equation is equivalent to the integral equation
\be{eq:E_int}
E_\al^{\pp} = \sum_{j=1}^{3} E_\al^{(j-1)}(x_l) \Phi_\al^{\pp}(x) - \G[R_\al](x) - \G[E_\al E_\al^{\pp}](x) =: \N[E_\al^{\pp}](x)
\ee
The following lemma which we directly quote from \cite{BlasiusCT} shows that the equation \eqref{eq:E_int} has a unique solution using a contractive mapping argument and provides an error estimate:
\begin{Lemma} \label{lem:finite}
Let $\al\in\J$ be fixed and assume that for some $\eps>0$ we have 
\begin{gather}
M\left(\lb E_\al(x_l) \rb + (x_r - x_l) \lb E_\al^{\p}(x_0) \rb \right) (1+\eps) + \frac{1}{2}(x_r-x_l)^2MB_0(1+\eps)^2 < \eps\,, \\
M\left(\lb E_\al(x_l) \rb + (x_r - x_l) \lb E_\al^{\p}(x_0) \rb \right) + (x_r-x_l)^2MB_0(1+\eps) < 1\,,
\intertext{where}
B_0 = M\lv R_\al \rv_{\infty,[x_l,x_r]} + \sum_{j=1}^3 M_j \lb E_\al^{(j)}(x_l) \rb.
\end{gather}
Then there exists a unique solution $E_\al^{\pp}$ of \eqref{eq:E_int} in a ball of radius $B_0(1+\eps)$ in the space $\mathrm{C}([x_l,x_r])$ equipped with the sup-norm $\lv \cdot \rv_{\infty,[x_l,x_r]}$. 
\end{Lemma}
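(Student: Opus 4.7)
\medskip

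\noindent\textbf{Proof proposal.} The lemma is a fixed-point statement for the nonlinear operator $\N$ on the closed ball
\[
\mathcal{B} := \{ v\in \mathrm{C}([x_l,x_r]) : \lv v \rv_{\infty,[x_l,x_r]} \le B_0(1+\eps)\}.
\]
The natural plan is to apply the Banach contraction principle to $\N$ on $\mathcal{B}$, so I would carry out two estimates: a self-mapping bound $\N(\mathcal{B})\subset\mathcal{B}$ and a Lipschitz (contraction) bound on $\mathcal{B}$. The first hypothesis of the lemma should correspond exactly to the self-mapping inequality, and the second to the contraction constant being strictly less than one.

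The crucial reconstruction step is to express $E_\al$ in terms of $v = E_\al^{\pp}$ using \eqref{12.15}: given $v\in\mathcal{B}$, I set $E_\al(x) = E_\al(x_l) + (x-x_l)E_\al^{\p}(x_l) + \int_{x_l}^x (x-y) v(y)\,dy$, which yields the pointwise bound
\[
|E_\al(x)| \le |E_\al(x_l)| + (x_r-x_l)|E_\al^{\p}(x_l)| + \tfrac{1}{2}(x_r-x_l)^2 \lv v \rv_{\infty,[x_l,x_r]}.
\]
Inserting this into the definition \eqref{eq:E_int} and using $\lv \G[\cdot] \rv_\infty \le M\lv \cdot \rv_\infty$ together with $\lb \Phi_{j,\al}^{\pp}\rb \le M_j$ to control the first term, I get
\[
\lv \N[v] \rv_\infty \le B_0 + M\!\left(|E_\al(x_l)| + (x_r-x_l)|E_\al^{\p}(x_l)|\right)\! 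B_0(1+\eps) + \tfrac{1}{2}M(x_r-x_l)^2 B_0^2(1+\eps)^2.
\]
Dividing by $B_0$, the requirement $\lv \N[v] \rv_\infty \le B_0(1+\eps)$ reduces precisely to the first hypothesis of the lemma.

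For the contraction step, I would take $v_1,v_2\in\mathcal{B}$ with reconstructions $E_\al^{(1)},E_\al^{(2)}$, write
\[
\N[v_1] - \N[v_2] = -\G\!\left[(E_\al^{(1)} - E_\al^{(2)})\, v_1 + E_\al^{(2)}(v_1 - v_2)\right],
\]
and use $|E_\al^{(1)}(x) - E_\al^{(2)}(x)| \le \tfrac{1}{2}(x_r-x_l)^2 \lv v_1 - v_2 \rv_\infty$ together with the pointwise bound on $|E_\al^{(2)}|$. Collecting the terms gives a Lipschitz constant
\[
\kappa = M\!\left(|E_\al(x_l)| + (x_r-x_l)|E_\al^{\p}(x_l)|\right) + M(x_r-x_l)^2 B_0(1+\eps),
\]
and the second hypothesis is exactly $\kappa<1$. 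Banach's theorem then yields a unique fixed point $E_\al^{\pp}\in\mathcal{B}$ of $\N$, which is the desired solution of \eqref{eq:E_int}.

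The main obstacle is not analytical but bookkeeping: one must be careful that the reconstruction of $E_\al$ from $v$ (and hence of $E_\al E_\al^{\pp}$) respects the initial data in both the self-mapping and contraction estimates, and that the constants $M$, $M_j$ from \eqref{eq:SupBegin}--\eqref{eq:SupEnd} are the correct ones for the interval $[x_l,x_r]$ under consideration. Since the detailed computation is carried out in \cite{BlasiusCT}, the cleanest exposition is simply to state how the hypotheses translate into the two required estimates and quote the fixed-point argument.
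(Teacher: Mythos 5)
Your proposal is correct and is exactly the argument the paper intends: the paper itself omits the proof and defers to \cite{BlasiusCT}, but it explicitly describes the result as following from a contractive mapping argument for $\N$ on a ball of radius $B_0(1+\eps)$, and your two estimates (self-mapping via the reconstruction \eqref{12.15} giving the first hypothesis, Lipschitz constant giving the second) reproduce the stated conditions precisely, including the way the $\tfrac12$'s from $|E_\al^{(1)}-E_\al^{(2)}|$ and from the bound on $|E_\al^{(2)}|$ combine into the coefficient $(x_r-x_l)^2 M B_0(1+\eps)$. No gaps.
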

We refer readers to \cite{BlasiusCT} for proof. 

\subsection{End of proof of Proposition \ref{prop:finite}}
Starting from $\I_1=[0,1.25]$ with the known initial conditions $E_\al(0)=E_\al^{\p}(0)=E_\al^{\pp}(0)=0$, it is verified that the lemma applies to all the subintervals $\I_k$'s and yields small error bounds as shown in Table \ref{tab:error}. Hence we conclude that $E_\al$ satisfies the equation \eqref{eq:E_ivp}--\eqref{eq:E_ic} with the bounds given in Theorem \ref{theorem}.\hfill \qed

\begin{table}
\caption{The error estimates on subintervals}
\label{tab:error}
\begin{center}
\begin{tabular}{|c|c|c|c|c|c|}
\hline 
 & $B_{0}$ & $\eps$ & $\Vert E\Vert_{\infty,\mathcal{I}_{j}}$ & $\Vert E^{\prime}\Vert_{\infty,\mathcal{I}_{j}}$ & $\Vert E^{\prime\prime}\Vert_{\infty,\mathcal{I}_{j}}$\tabularnewline
\hline 
\hline 
$\mathcal{I}_{1}$ & $1.6538\times10^{-6}$ & $5\times10^{-6}$ & $1.6538\times10^{-6}$ & $2.0673\times10^{-6}$ & $1.2921\times10^{-6}$\tabularnewline
\hline 
$\mathcal{I}_{2}$ & $2.4371\times10^{-6}$ & $7\times10^{-7}$ & $2.4371\times10^{-6}$ & $3.6556\times10^{-7}$ & $1.6296\times10^{-6}$\tabularnewline
\hline 
$\mathcal{I}_{3}$ & $4.3873\times10^{-6}$ & $3\times10^{-6}$ & $4.3873\times10^{-6}$ & $2.6324\times10^{-6}$ & $2.6386\times10^{-6}$\tabularnewline
\hline 
$\mathcal{I}_{4}$ & $7.4947\times10^{-6}$ & $4\times10^{-6}$ & $7.4947\times10^{-6}$ & $3.7474\times10^{-6}$ & $4.8916\times10^{-6}$\tabularnewline
\hline 
\end{tabular}
\par\end{center}
\end{table}

\section{Solution in $t\ge T \ge 1.96$ for $a>0$, $\lb c\rb <\frac{1}{4}$ and proof of Proposition \ref{prop:far}}

The construction of quasi-solution $F_{0}$ for $x \in [\frac{5}{2}, \infty )$ relies on large $x$ asymptotics, which as it turns out, gives a desirably accurate solution  
in the entire interval. For the Blasius solution, it is known that any solution with $\lim_{x \rightarrow \infty} F^{\p}(x) = a > 0$ must have the representation 
\be{eq:far_soln} 
F(x) = a x+ b + G(x) 
\ee
where $G(x)$ is exponentially small in $x$ for large $x$. Indeed, through change of
variable $t=t(x;a,b)$ given in Definition \ref{def:t} and $G(t) = \sqrt{\frac{a}{2 t}} q(t)$ with $q$ satisfying
\be{eq:q_eq}
{\frac {d^{3}}{d{t}^{3}}}q + \left ( 1 +  
\frac {q}{2t} \right )  {\frac {d^{2}}{d{t}^{2}}}q
+ \left( -\frac {1}{2t}+\frac{3}{4 t^2} -
\frac {q}{4 t^{2}} \right) \frac {dq}{dt} 
+ \left ( \frac{1}{2 t^2} -\frac{3}{4 t^3} \right ) q  
 +\frac{q^2}{4 t^3} = 0 
\ee
and from a general theory \cite{Duke},\footnote{Though the non-degeneracy condition 
stated in \cite{Duke} does not hold, a small modification leads to the same result.}
it may be deduced
that small solutions $q$ must have the convergent series representation
\be{eq:q_series}
q (t) = \sum_{n=1}^\infty \xi^n Q_n (t)\,,\, {\rm where}~ \xi= \frac{c e^{-t}}{\sqrt{t}}  
\ee
where the equations for $Q_n$ may be deduced by plugging in \eqref{eq:q_eq} into \eqref{eq:q_series} and equating different powers of $\xi$. With appropriate matching
at $\infty$, one obtains 
\be{eq:Q12}
Q_1 (t) = 2 t I_0 (t)\quad \text{and} \quad Q_2 (t) = - t I_0(t) - t I_0(t)^2 + 2 t J_0(t)
\ee
where
\begin{align}
I_0(t) & = 1 - \sqrt{\pi t} e^{t} {\rm erfc} (\sqrt{t}) = \frac{1}{2} \int_0^\infty \frac{e^{-st}}{(1+s)^{3/2}} ds\,, \\
J_0(t) & = 1 - \sqrt{2 \pi t} e^{2 t} {\rm erfc} (\sqrt{2 t}) = \frac{1}{4} \int_0^\infty 
\frac{e^{-st}}{(1+s/2)^{3/2}} ds\,.
\end{align}
The two term truncation of \eqref{eq:q_series} proved adequate to determine an accurate quasi-solution in an $x$-domain that corresponds to $t \ge 1.96$ if $|c| \le \frac{1}{4}$ to within the quoted accuracy. Note that the solution is only complete after determining $(a, b, c)$ through 
matching of $F_\al$, $F_\al^\p$ and $F_\al^{\pp}$ at $x=\frac{5}{2}$.
Since $(a, b, c)$ only needs to be restricted to some small neighborhood of $(a_0(\al), b_0(\al), c_0(\al))$ to accomplish matching (see Proposition \ref{prop:matching}), the restriction 
$t \ge 1.96$ is seen to include $x \ge \frac{5}{2}$ as shown in Note \ref{note:findingT}. Furthermore, the restriction $|c| \le \frac{1}{4}$ in Proposition \ref{prop:matching} is appropriate for the quoted error estimates in $x \ge \frac{5}{2}$ in Theorem \ref{theorem}.
We decompose 
\be{eq:q} 
q(t) = q_0 (t) + \mathcal{E} (t)\,,
\ee
where 
\be{eq:q0}
q_0 (t) = \frac{c e^{-t}}{\sqrt{t}} Q_1 (t) + \frac{c^2 e^{-2t}}{t} Q_2 (t)\,. 
\ee
\begin{Note}
The functions $q$ and $q_0$ (as well as some others to be introduced later) are dependent on $c$, but for the simplicity of notation, it will be suppressed in the current section.
\end{Note}
On substituting in \eqref{eq:q_eq}, we obtain a nonlinear integral equation for $\E$:
\begin{multline} \label{eq:EpsilonDE}
\E^{\ppp} + \left(1+\frac{q_0}{2t} \right)\E^{\pp} + \left(-\frac{1}{2t} + \frac{3}{4t^2} - \frac{q_0}{4t^2} \right)\E^{\p} \\
 + \left(\frac{1}{2t^2} - \frac{3}{4t^3} + \frac{q_0^{\pp}}{2t} - \frac{q_0^{\p}}{4t^2} + \frac{q_0}{2t^3}\right)\E = -\frac{\E}{2t}\E^{\pp} + \frac{\E}{4t^2}\E^{\p} - \frac{\E^2}{4t^3} - R,
\end{multline}
where the remainder $R=R(t)$ is given by 
\begin{multline} \label{eq:Rfar}
R = 
{\frac {d^{3}}{d{t}^{3}}}q_0 + \left ( 1 +  
\frac {q_0}{2t} \right )  {\frac {d^{2}}{d{t}^{2}}}q_0 \\
+ \left( -\frac {1}{2t}+\frac{3}{4 t^2} -
\frac {q_0}{4 t^{2}} \right) \frac {d}{dt} q_0
+ \left ( \frac{1}{2 t^2} -\frac{3}{4 t^3} \right ) q_0  
 +\frac{q_0^2}{4 t^3}\,.
\end{multline}
Using the auxiliary function 
\be{eq:h}
h(t) = e^t \left(\sqrt{t}\E^{\pp}(t) - \frac{\E^{\p}(t)}{2\sqrt{t}} + \frac{\E(t)}{2t^{3/2}}\right)
\ee
which is related to $\E$ by
\be{eq:Epsilon}
\mathcal{E} (t) = \sqrt{t}
\int_{\infty}^t \frac{ds}{\sqrt{s}} \int_{\infty}^s \frac{e^{-\tau}}{\sqrt{\tau}} h(\tau)\, d\tau,
\ee
the equation \eqref{eq:EpsilonDE} is now written as 
\be{eq:hDE}
h^{\p} = -\frac{q_0 e^t}{2t}h + e^t B\E - \frac{\E}{2t}h - \sqrt{t}e^t R, 
\ee
where
\be{}
B(t) = - \frac{q_0^{\pp}(t)}{2t^{1/2}} + \frac{q_0^{\p}(t)}{4t^{3/2}} - \frac{q_0(t)}{4t^{5/2}}\,.
\ee
This equation can be rewritten in an integral form
\begin{multline} \label{eq:h_int}
h(t) = h_0(t) - \int_{\infty}^{t} \frac{q_0(\tau)e^{\tau}}{2\tau} h(\tau)\,d\tau \\
+ \int_{\infty}^{t} e^{\tau}B(\tau)\E(\tau)\,d\tau 
-\int_{\infty}^{t} \frac{\E(\tau)}{2\tau}h(\tau)\,d\tau =: \N[h](t)\,,
\end{multline}
where 
\be{}
h_0(t) = -\int_{\infty}^{t} \sqrt{\tau}e^{\tau}R(\tau)\,d\tau.
\ee
A contractive mapping argument in a small ball inside the Banach space of $\mathrm{C}([T,\infty))$ equipped with the weighted norm 
\be{defn:norm}
\lv h \rv := \sup_{t\ge T} te^{2t}\lb h(t) \rb
\ee
is possible by utilizing the smallness of the residual  $R=R (t)$ as shown in the following proposition:
\begin{Proposition}
For $\lb c\rb \le \frac{1}{4}$, $\eps < 0.03$, and $T\ge1.96$, there exists a unique solution to the integral equation \eqref{eq:h_int} in a ball of radius $(1+\eps)\lv h_0\rv$, implying that $\lv h\rv \le (1+\eps)\lv h_0 \rv \le 1.6955 \times 10^{-4}$.
\end{Proposition}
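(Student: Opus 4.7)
The plan is to apply the Banach fixed-point theorem to the nonlinear operator $\N$ defined in \eqref{eq:h_int} on the closed ball $B_r := \{ h \in \mathrm{C}([T,\infty)) : \lv h \rv \le r\}$ with $r = (1+\eps) \lv h_0 \rv$, using the weighted norm \eqref{defn:norm}. Two things must be checked: (i) $\N$ maps $B_r$ into itself, and (ii) $\N$ is a contraction on $B_r$, with a Lipschitz constant strictly less than $\frac{\eps}{1+\eps}$ so that the fixed point lies in $B_r$ and yields the stated bound $\lv h \rv \le 1.6955 \times 10^{-4}$.

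First, I would pin down the numerical value of $\lv h_0 \rv$. Since $q_0$ is the two-term truncation of the convergent series \eqref{eq:q_series}, substituting $q_0$ into the defining expression \eqref{eq:Rfar} cancels the $\xi^1$ and $\xi^2$ contributions, so $R(t) = O(c^3 e^{-3t} t^{-3/2})$ as $t \to \infty$. Each of the terms in \eqref{eq:Rfar} can be written as an explicit combination of $I_0, J_0, \erfc$ and polynomials in $t^{-1}$; using the sharp bounds $0 \le I_0(t) \le \frac{1}{2t}$ and an analogous bound on $J_0$ that follow from the integral representations, together with $|c| \le \tfrac{1}{4}$, I would obtain a pointwise estimate $|\sqrt{\tau} e^{\tau} R(\tau)| \le C_R e^{-2\tau}/\tau^2$ uniformly for $\tau \ge T$. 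Integrating from $\infty$ to $t$ then yields $t e^{2t} |h_0(t)| \le C_R'$ so that $\lv h_0 \rv$ is bounded by an explicit constant slightly below $1.6955\times 10^{-4}/(1+\eps)$.

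Second, for the self-mapping and contraction properties, I would estimate each integral term on the right-hand side of \eqref{eq:h_int} in the weighted norm, using the pointwise bound $|h(\tau)| \le \lv h \rv /(\tau e^{2\tau})$ for $h \in B_r$. For the kernel $q_0(\tau) e^{\tau}/(2\tau)$, the asymptotics $I_0(\tau) \le \frac{1}{2\tau}$ give a factor of size $O(|c|/\tau^{3/2})$, so this integral contributes at most $K_1(|c|,T)\lv h \rv$ with a small constant. For the term $e^{\tau} B(\tau) \mathcal{E}(\tau)$, I would first derive a pointwise bound $|\mathcal{E}(\tau)| \le C_{\mathcal{E}} \lv h \rv e^{-2\tau}/\tau^{3/2}$ by carrying out the double integration in \eqref{eq:Epsilon}; then, since $B(\tau) e^{\tau}$ decays at least like $|c| \tau^{-3/2}$, this term is controlled by $K_2(|c|, T) \lv h \rv$. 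The quadratic term $\mathcal{E}(\tau) h(\tau)/(2\tau)$ is of order $\lv h \rv^2$ and hence negligible in the small-ball regime. Taking $h_1, h_2 \in B_r$ and using linearity of the first two kernels together with the product rule for the quadratic piece shows $\lv \N[h_1] - \N[h_2]\rv \le (K_1 + K_2 + 2K_3 r)\lv h_1 - h_2 \rv$.

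The main obstacle will be the numerical bookkeeping: the kernels $q_0 e^{\tau}/(2\tau)$ and $e^{\tau} B(\tau)$ are not exponentially small (only $I_0$-type decay), so the improvement all comes from the weighted norm and from the restriction $|c| \le \tfrac{1}{4}$. It will be essential to pick bounds on $I_0, J_0$ that are sharp enough on $[T,\infty)$ with $T \ge 1.96$ to force $K_1 + K_2 + 2K_3 r < \eps/(1+\eps) < 0.03/1.03$ and simultaneously to make $\lv h_0 \rv$ fit within the stated envelope. Once that is done, Banach's theorem gives the unique fixed point $h \in B_r$ with $\lv h \rv \le (1+\eps) \lv h_0 \rv \le 1.6955 \times 10^{-4}$, completing the proposition.
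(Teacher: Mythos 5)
Your plan follows essentially the same route as the paper, which itself only sketches this step and defers the detailed contraction-mapping estimates to Costin and Tanveer \cite{BlasiusCT}: a Banach fixed-point argument for $\N$ on the ball of radius $(1+\eps)\lv h_0\rv$ in the weighted norm \eqref{defn:norm}, with $\lv h_0\rv$ controlled via the $O(\xi^3)$ cancellation in the residual $R$ and bounds of the type $I_0(t)\le\frac{1}{2t}$. One small correction: carrying out the double integration in \eqref{eq:Epsilon} with $|h(\tau)|\le\lv h\rv\,\tau^{-1}e^{-2\tau}$ gives $|\E(\tau)|\le\frac{1}{9}\lv h\rv\,\tau^{-3/2}e^{-3\tau}$, not $e^{-2\tau}$ as you wrote --- that extra factor of $e^{-\tau}$ is needed to keep the contraction constant below $\eps/(1+\eps)$.
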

This proposition is also found in Costin and Tanveer \cite{BlasiusCT}, the only difference being $T\ge1.96$ here whereas $T \ge 1.99$ in their paper. The proof is omitted here. The error bounds given in Theorem \ref{theorem} follows immediately from: 
\begin{Lemma}
For $a>0$, $\lb c\rb\le\frac{1}{4}$, and $t\ge T\ge1.96$, the function $\E$ satisfies the following bounds:
\begin{align}
\lb \sqrt{\frac{a}{2t}} \E(t) \rb & \le  \sqrt{\frac{a}{2}} \frac{1}{9} t^{-2} e^{-3t} \lv h\rv\,, \\
\lb \frac{d}{dx} \sqrt{\frac{a}{2t}} \E(t) \rb & \le \frac{a}{3} t^{-3/2} e^{-3t} \lv h\rv\,,\\
\lb \frac{d^2}{dx^2} \sqrt{\frac{a}{2t}} \E(t) \rb & \le \sqrt{2}a^{3/2} t^{-1} e^{-3t} \lv h\rv\,,
\end{align}
where $t=t(x;a,b)=\frac{a}{2}\left(x+\frac{b}{a}\right)^2$. 
\end{Lemma}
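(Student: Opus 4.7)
The plan is to work entirely from the integral representation \eqref{eq:Epsilon} of $\mathcal{E}$ in terms of $h$, using the weighted bound $|h(t)|\le t^{-1}e^{-2t}\lv h\rv$ that follows directly from the definition \eqref{defn:norm}. The key organizational step is to introduce the intermediate quantity
\begin{equation*}
V(t):=\mathcal{E}^{\p}(t)-\frac{\mathcal{E}(t)}{2t},
\end{equation*}
so that differentiating \eqref{eq:Epsilon} once gives $V(t)=\int_{\infty}^{t}\frac{e^{-\tau}}{\sqrt{\tau}}h(\tau)\,d\tau$, and differentiating a second time gives the clean identity $V^{\p}(t)=e^{-t}h(t)/\sqrt{t}$, which is simply the rearrangement of \eqref{eq:h} already established in Section 4.

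With $V$ in hand the estimates proceed in three layers. First, using $|h(\tau)|\le\tau^{-1}e^{-2\tau}\lv h\rv$ and the fact that $e^{-3\tau}\tau^{-3/2}$ is decreasing, I would bound
\begin{equation*}
|V(t)|\le\lv h\rv\int_{t}^{\infty}\tau^{-3/2}e^{-3\tau}\,d\tau\le\frac{\lv h\rv}{3}t^{-3/2}e^{-3t},
\end{equation*}
where the last step pulls out $t^{-3/2}$ and integrates $e^{-3\tau}$. Second, substituting this into $\mathcal{E}(t)=\sqrt{t}\int_{\infty}^{t}V(s)/\sqrt{s}\,ds$ and applying the same monotonicity trick yields $|\mathcal{E}(t)|\le\frac{\lv h\rv}{9}t^{-3/2}e^{-3t}$. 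Multiplying by $\sqrt{a/(2t)}$ gives the first asserted bound exactly.

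For the $x$-derivatives, I convert via the chain rule using $dt/dx=a(x+b/a)=\sqrt{2at}$. A short computation with $u(t):=\sqrt{a/(2t)}\,\mathcal{E}(t)$ gives $u^{\p}(t)=\sqrt{a/2}\,t^{-1/2}V(t)$, so that
\begin{equation*}
\frac{du}{dx}=\sqrt{2at}\,u^{\p}(t)=a\,V(t),
\end{equation*}
and the second bound follows immediately from the estimate on $|V|$. Differentiating once more,
\begin{equation*}
\frac{d^{2}u}{dx^{2}}=a\sqrt{2at}\,V^{\p}(t)=a\sqrt{2at}\cdot\frac{e^{-t}h(t)}{\sqrt{t}}=\sqrt{2}\,a^{3/2}e^{-t}h(t),
\end{equation*}
and invoking $|h(t)|\le t^{-1}e^{-2t}\lv h\rv$ produces the third bound. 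There is no real obstacle here beyond keeping the chain-rule bookkeeping tidy; the only minor subtlety is noting at the outset that $\frac{\sqrt{2at}}{\sqrt{t}}=\sqrt{2a}$ is $t$-independent, which is what causes the second $x$-derivative to reduce cleanly to $h(t)$ itself and thereby absorb a full factor of $e^{-t}$ into the exponential weight.
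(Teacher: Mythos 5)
Your proposal is correct and follows essentially the same route as the paper: the same key inequality $\lb\int_t^\infty s^{-1/2}e^{-s}h(s)\,ds\rb\le\tfrac{1}{3}t^{-3/2}e^{-3t}\lv h\rv$, the same identification of $\tfrac{d}{dx}\sqrt{a/(2t)}\,\E=a\bigl(\E^{\p}-\tfrac{\E}{2t}\bigr)$ with the single integral of $h$, and the same reduction of the second $x$-derivative to $\sqrt{2}a^{3/2}e^{-t}h(t)$. Naming the intermediate quantity $V$ explicitly is just a notational tidying of what the paper does inline.
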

\begin{proof}
Note that by the definition of the weighted norm $\lv \cdot \rv$ given in \eqref{defn:norm},
\be{eq:norm_ineq}
\lb \int_\tau^\infty s^{-1/2}e^{-s} h(s) \, ds \rb \le \frac{1}{3}\tau^{-3/2}e^{-3\tau}\lv h\rv.
\ee
Using \eqref{eq:Epsilon} and the above inequality, 
\be{}
\lb \E(t) \rb = 
\lb \int_{\infty}^t s^{-1/2} \int_{\infty}^s \tau^{-1/2} e^{-\tau} h(\tau)\, d\tau\,ds\rb \le \frac{1}{9}t^{-3/2}e^{-3t}\lv h\rv
\ee
and from this the first inequality follows immediately. To see the second statement, we note from \eqref{eq:Epsilon} that 
\be{}
\frac{d}{dx}\sqrt{\frac{a}{2t}}\E(t) = a\left(\E^{\p}(t) - \frac{1}{2t}\E(t)\right)
= a \int_\infty^t \tau^{-1/2} e^{-\tau} h(\tau)\,d\tau,
\ee
and use the inequality \eqref{eq:norm_ineq}. The last one follows from checking that 
\be{}
\frac{d^2}{dx^2}\sqrt{\frac{a}{2t}}\E(t) = \sqrt{2}a^{3/2}e^{-t}h(t)\,.
\ee
and using the definition of the norm. \hfill
\end{proof}
This leads to the proof of Proposition \ref{prop:far}.

\section{Matching of solutions and proof of Proposition \ref{prop:matching}}
Let $\al\in\J$ and $(a,b,c)\in \S_\al$. In order for the two representations of the solution, \eqref{eq:finite_soln} and \eqref{eq:far_soln}, to coincide at $x = \frac{5}{2}$ we match them and their first two derivatives at the point. Let $t_m=t(\tfrac{5}{2};a,b)$. Then by \eqref{eq:q0} and \eqref{eq:Epsilon}, we get
\begin{align} \label{N1}
a  = F_\al^\prime (\tfrac{5}{2}) 
- a \left ( q_0^\prime (t_m; c) - \frac{q_0 (t_m; c)}{2t_m} \right ) -  
a \int_{\infty}^{t_m} \frac{e^{-\tau}}{\sqrt{\tau}} h (\tau; c)\,d\tau 
=:N_1 (a, b, c) 
\end{align}
\begin{multline} \label{N2}
b  = F_\al(\tfrac{5}{2}) - \frac{5}{2} N_1(a,b,c) - \sqrt{\frac{a}{2 t_m}} q_0 (t_m; c)  \\
 - \sqrt{\frac{a}{2}} \int_{\infty}^{t_m} \tau^{-1/2} \int_{\infty}^\tau s^{-1/2} e^{-s} h(s; c)\,ds 
:=N_2 (a, b, c) 
\end{multline}
\begin{align} \label{N3}
c  =  \frac{1}{\sqrt{2} a^{3/2}}
\left [ V (t_m; c) + \frac{1}{c} h (t_m; c) \right ]^{-1}  
e^{t_m} F_\al^{\prime \prime} (\tfrac{5}{2}) =: N_3 (a, b, c) \qquad \qquad \qquad
\end{align}
where 
\be{}
V(t;c) = -\frac{2}{c} t e^t B(t;c).
\ee

\begin{Definition}
\label{defn:5}
We define $\mathbf{A} = (a,\tfrac{1}{2}b,\tfrac{1}{2}c)$ and 
\be{}
\Nf[\Af]=(N_1(a,b,c),\tfrac{1}{2}N_2(a,b,c),\tfrac{1}{2}N_3(a,b,c))\,.
\ee
For each $\al\in\J$, define 
\be{}
\mathbf{A}_{0,\al} = (a_0(\al),\tfrac{1}{2}b_0(\al),\tfrac{1}{2}c_0(\al)).
\ee
Define also
\be{}
\mathcal{S}_{\Af,\al} = \left \{ 
\| {\bf A} - {\bf A}_{0,\al} \|_2 \le \rho_0=5 \times 10^{-4} \right \}
\ee
where $\|. \|_2$ is the Euclidean norm and let
\begin{equation}
\label{17.3.0}
{\bf J} = \frac{\partial{\bf N}}{\partial {\bf A}} =
{\begin{pmatrix} \partial_a N_1 & 2 \partial_b N_1  & 2 \partial_c N_1 \cr
         \frac{1}{2} \partial_a N_2 &  \partial_b N_2 & \partial_c N_2 \cr
         \frac{1}{2} \partial_a N_3 &  \partial_b N_3 & \partial_c N_3 
\end{pmatrix}}
\end{equation}
be the Jacobian. Let $\|{\bf J} \|_2$ denote the
$l^2 $ (Euclidean) norm of the Jacobian matrix:
\begin{multline}
\label{17.3.0.0}
\| J \|^2_2 = 
\left ( \partial_a N_1 \right )^2 + 4 \left (\partial_b N_1\right )^2  + 4 \left ( \partial_c N_1 \right )^2 \\
+\frac{1}{4} \left ( \partial_a N_2 \right )^2 
+\left (\partial_b N_2\right )^2  + \left ( \partial_c N_2 \right )^2
+\frac{1}{4} \left ( \partial_a N_3 \right )^2 
+\left (\partial_b N_3\right )^2  + \left ( \partial_c N_3 \right )^2.
\end{multline}
\end{Definition}

\begin{Note}
$\Af \in \S_{\Af,\al}$ implies that $(a,b,c)\in\S_\al$. The system of equations \eqref{N1}--\eqref{N3} is now succinctly written as 
\be{}
\Af = \Nf [\Af].
\ee
\end{Note}

\begin{Lemma} \label{lem:matching}
Let $\al\in\J$. Suppose that there exists some $\beta \in (0, 1)$ satisfying 
\begin{align} 
\| {\bf A}_{0,\al} - {\bf N} [{\bf A}_{0,\al}]\|_2 & \le (1-\beta) \rho_0 \,, \label{contract1}\\
\sup_{{\bf A} \in \mathcal{S}_{\Af,\al}} \| {\bf J} \|_2\ & \le \beta \,. \label{contract2}
\end{align}
Then the equation ${\bf A} = \mathbf{N} [ {\bf A} ]$ has a unique solution in $\S_{\Af,\al}$.
\end{Lemma}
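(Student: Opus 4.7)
The plan is to recognize this as a direct application of the Banach fixed-point theorem on the closed ball $\mathcal{S}_{\mathbf{A},\alpha}$, which is a complete metric subspace of $\mathbb{R}^3$ under the Euclidean metric. Thus I need to verify two things: (i) $\mathbf{N}$ maps $\mathcal{S}_{\mathbf{A},\alpha}$ into itself, and (ii) $\mathbf{N}$ is a strict contraction there with Lipschitz constant $\beta < 1$.

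First I would establish the contraction property as the main workhorse. The ball $\mathcal{S}_{\mathbf{A},\alpha}$ is convex, so for any $\mathbf{A}_{1},\mathbf{A}_{2}\in\mathcal{S}_{\mathbf{A},\alpha}$ the segment $\mathbf{A}(s)=\mathbf{A}_{2}+s(\mathbf{A}_{1}-\mathbf{A}_{2})$, $s\in[0,1]$, stays in the ball, and (since $\mathbf{N}$ is smooth in its arguments on this ball) the fundamental theorem of calculus gives
\begin{equation}
\mathbf{N}[\mathbf{A}_{1}]-\mathbf{N}[\mathbf{A}_{2}]=\int_{0}^{1}\mathbf{J}(\mathbf{A}(s))\,(\mathbf{A}_{1}-\mathbf{A}_{2})\,ds.
\end{equation}
Since the Frobenius (Euclidean) norm of a matrix dominates its operator norm, $\|\mathbf{J}\mathbf{v}\|_{2}\le\|\mathbf{J}\|_{2}\|\mathbf{v}\|_{2}$, so hypothesis \eqref{contract2} together with the triangle inequality under the integral yields
\begin{equation}
\|\mathbf{N}[\mathbf{A}_{1}]-\mathbf{N}[\mathbf{A}_{2}]\|_{2}\le\beta\,\|\mathbf{A}_{1}-\mathbf{A}_{2}\|_{2}.
\end{equation}

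Next I would check the self-mapping property by inserting the center $\mathbf{A}_{0,\alpha}$ in a triangle inequality: for any $\mathbf{A}\in\mathcal{S}_{\mathbf{A},\alpha}$,
\begin{equation}
\|\mathbf{N}[\mathbf{A}]-\mathbf{A}_{0,\alpha}\|_{2}\le\|\mathbf{N}[\mathbf{A}]-\mathbf{N}[\mathbf{A}_{0,\alpha}]\|_{2}+\|\mathbf{N}[\mathbf{A}_{0,\alpha}]-\mathbf{A}_{0,\alpha}\|_{2}\le\beta\rho_{0}+(1-\beta)\rho_{0}=\rho_{0},
\end{equation}
using the contraction estimate just proved and hypothesis \eqref{contract1}. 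Hence $\mathbf{N}[\mathbf{A}]\in\mathcal{S}_{\mathbf{A},\alpha}$. With both conditions in hand, the Banach fixed-point theorem delivers a unique $\mathbf{A}\in\mathcal{S}_{\mathbf{A},\alpha}$ with $\mathbf{A}=\mathbf{N}[\mathbf{A}]$, which by the Note following Definition~\ref{defn:5} corresponds to a unique $(a,b,c)\in\mathcal{S}_{\alpha}$ solving the matching system \eqref{N1}--\eqref{N3}.

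There is no genuine obstacle in the lemma itself; everything reduces to the standard contraction-mapping argument once the two quantitative hypotheses are assumed. The real work, which I would not attempt inside the proof of the lemma, lies in subsequently verifying \eqref{contract1} and \eqref{contract2} for a concrete $\beta\in(0,1)$: bounding $\|\mathbf{A}_{0,\alpha}-\mathbf{N}[\mathbf{A}_{0,\alpha}]\|_{2}$ requires controlling the tail integrals involving $h$ via Proposition~\ref{prop:far} and using the finite-domain bounds from Proposition~\ref{prop:finite} to evaluate $F_{\alpha}(\tfrac{5}{2}),F_{\alpha}^{\prime}(\tfrac{5}{2}),F_{\alpha}^{\prime\prime}(\tfrac{5}{2})$, while bounding $\|\mathbf{J}\|_{2}$ uniformly on $\mathcal{S}_{\mathbf{A},\alpha}$ requires differentiating \eqref{N1}--\eqref{N3} in $(a,b,c)$ and estimating each partial derivative, again leveraging the exponential smallness of $h$ and the explicit formulas for $q_{0}$, $B$, and $V$.
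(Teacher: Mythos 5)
Your proof is correct and follows essentially the same route as the paper: both establish the self-mapping property via the triangle inequality through $\mathbf{N}[\mathbf{A}_{0,\alpha}]$ and the contraction property via the mean-value estimate $\|\mathbf{N}[\mathbf{A}_1]-\mathbf{N}[\mathbf{A}_2]\|_2\le\|\mathbf{J}\|_2\|\mathbf{A}_1-\mathbf{A}_2\|_2\le\beta\|\mathbf{A}_1-\mathbf{A}_2\|_2$, then invoke the Banach fixed-point theorem. Your added justifications (convexity of the ball, the integral form of the mean-value theorem, and the Frobenius norm dominating the operator norm) only make explicit what the paper leaves implicit.
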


\begin{proof}
Fix an $\al\in\J$ and let $\Af \in \S_{\Af,\al}$. By the mean-value theorem, 
\begin{multline}
\| {\bf N} [{\bf A} ] - {\bf A}_{0,\al} \|_2 
\le \| {\bf N} [{\bf A}] - {\bf N} [{\bf A}_{0,\al} ] \|_2 + 
\| {\bf N} [{\bf A}_{0,\al}] - {\bf A}_{0,\al} \|_2  \\
\le \| {\bf J} \|_2 \rho_0  + \rho_0 (1-\beta)  
\le \rho_0.
\end{multline}
Moreover, if ${\bf A}_1, {\bf A}_2 \in \mathcal{S}_{\Af,\al}$,
\be{}
\| {\bf N} [{\bf A}_1 ] 
- {\bf N} [{\bf A}_2 ] \|_2 
\le \|{\bf J} \|_2 \| {\bf A}_1 - {\bf A}_2 \|_2  
\le \beta \|{\bf A}_1 - {\bf A}_2 \|_2 
\ee
This implies that the map $\Nf$ maps the ball $\S_{\Af,\al}$ back to itself and is contractive there. Hence, by the Banach space fixed point theorem, the conclusion follows. \hfill
\end{proof}

The Proposition \ref{prop:matching} follows from Lemma \ref{lem:matching} once we show that the conditions \eqref{contract1} and \eqref{contract2} are satisfied for any $\al\in\J$. Following the procedures outlined in \cite{BlasiusCT}, it is not difficult to show that $\beta \le 0.8381$ and that $\lv \Af_{0,\al}-\Nf[\Af_{0,\al}]\rv_2 \le 4.1443 \times 10^{-5} \le (1-\beta)\rho_0$ for any $\al\in\J$. This completes the proof of Proposition \ref{prop:matching}.\hfill \qed

\vfill

\vfill 
\eject
\end{document}